\documentclass[a4paper,aps,showpacs,twocolumn,superscriptaddress]{revtex4-1} 
  

\usepackage[utf8]{inputenc}  
\usepackage[T1]{fontenc}     
\usepackage[british]{babel}  
\usepackage[usenames,dvipsnames]{color} 
	\usepackage[colorlinks,citecolor=blue,linkcolor=magenta,urlcolor=blue]{hyperref}    
\usepackage[babel]{microtype}  

\usepackage{amsmath,amssymb,amsthm,bm,mathtools,amsfonts,mathrsfs,bbm,dsfont}
\usepackage{physics}

\usepackage{tikz}
\usepackage{float}
\usepackage[center]{caption}
\usepackage{subcaption}
\usetikzlibrary{positioning}
\usetikzlibrary{shapes}
\captionsetup{justification=raggedright,singlelinecheck=false}
\captionsetup[subfigure]{justification=centering}



\newtheoremstyle{mystyle}
  {3pt}
  {3pt}
  {\normalfont}
  {0pt}
  {\bf}
  {.}
  {5pt}
  {}

\theoremstyle{mystyle}

\newtheorem{observation}{Observation}

\newtheorem{claim}{Claim}

\begin{document}

\nonfrenchspacing
\allowdisplaybreaks

\title{Network Quantum Steering}

\author{Benjamin D.M. Jones}
\affiliation{H. H. Wills Physics Laboratory, University of Bristol, Bristol, BS8 1TL, UK.}
\affiliation{Quantum Engineering Centre for Doctoral Training,  University of Bristol, UK.}
\affiliation{Department of Applied Physics, University of Geneva, 1211 Geneva, Switzerland.}

\author{Ivan \v Supi\'c}
\affiliation{Department of Applied Physics, University of Geneva, 1211 Geneva, Switzerland.}
\affiliation{CNRS, LIP6, Sorbonne Universit\'{e}, 4 Place Jussieu, 75005 Paris, France.}

\author{Roope Uola}
\affiliation{Department of Applied Physics, University of Geneva, 1211 Geneva, Switzerland.}

\author{Nicolas Brunner}
\affiliation{Department of Applied Physics, University of Geneva, 1211 Geneva, Switzerland.}

\author{Paul Skrzypczyk}
\affiliation{H. H. Wills Physics Laboratory, University of Bristol, Bristol, BS8 1TL, UK.}

\date{\today}

\begin{abstract}

The development of large-scale quantum networks promises to bring a multitude of technological applications as well as shed light on foundational topics, such as quantum nonlocality. It is particularly interesting to consider scenarios where sources within the network are statistically independent, which leads to so-called network nonlocality, even when parties perform fixed measurements. Here we promote certain parties to be trusted and introduce the notion of \textit{network steering} and \textit{network local hidden state (NLHS) models} within this paradigm of independent sources. In one direction, we show how results from Bell nonlocality and quantum steering can be used to demonstrate network steering. We further show that it is a genuinely novel effect, by exhibiting unsteerable states that nevertheless demonstrate network steering, based upon entanglement swapping, yielding a form of activation. On the other hand, we provide no-go results for network steering in a large class of scenarios, by explicitly constructing NLHS models. 

\end{abstract}

\maketitle

The quest to deepen our understanding of quantum theory and its seemingly counter-intuitive properties has lead to many fruitful avenues of research. In particular, the phenomenon of quantum correlations have enjoyed significant attention and developments, see e.g. \cite{brunner2014bell,cavalcanti2016quantum,uola2020quantum}.


Quantum correlations expose a rich structure when considered in scenarios with many parties. A case of particular interest is that of quantum networks, featuring a number of distant parties connected by several quantum sources. Significant further work is still required to reach a deeper theoretical understanding of these scenarios, whilst also keeping inline with experimental and technological developments towards quantum networks 
\cite{wehner2018quantum}. 

Recently, a generalisation of the concept of Bell locality \cite{Bell} was proposed to tackle the question of quantum nonlocality in networks; see \cite{tavakoli2021bell} for a recent review. The key idea is to consider the various sources in the network to be statistically independent \cite{branciard2010,branciard2012bilocal,fritz2012beyond}. This independence leads to non-convexity in the space of relevant correlations, undermining the use of pre-existing tools and creating a need for new approaches, both analytically \cite{Chaves2012,tavakoli2014nonlocal,chaves2016,Rosset,Weilenmann2018,Wolfe2019,gisin2020constraints,Aberg2020,Wolfe2021} and numerically \cite{krivachy2020neural}. The network structure offers new interesting effects, such as the possibility to certify quantum nonlocality ``without inputs'' (i.e. a scenario where each party performs a fixed quantum measurement) \cite{fritz2012beyond,branciard2012bilocal,fraser2018causal,renou2019genuine,renou2020}. Also, the use of non-classical measurements allows for novel forms of quantum nonlocal correlations that are genuine to networks \cite{Supic}. In parallel, several works have  explored the structure of quantum states assuming a certain underlying network structure \cite{kraft2020quantum,Navascues,Luo2020,Spee}.


In this work, motivated by the difficulty in characterising quantum networks both conceptually and computationally, we consider quantum network scenarios in which some of the parties are trusted while the others are untrusted. This naturally connects to the notion of quantum steering \cite{Wiseman2007} (see \cite{cavalcanti2016quantum,uola2020quantum} for reviews) which captures quantum correlations in a scenario involving a trusted and an untrusted party. While the notion of multipartite steering has been previously considered \cite{cavalcanti2015detection, he2013genuine}, our work explores a different direction, targeting the  scenario of networks with independent sources.

Our main focus here will be on the simplest setting of a linear network with trusted endpoints and intermediate untrusted parties who each perform a fixed measurement. We begin by formalising the notions of \textit{network local hidden state} (NLHS) models, and network steering. We then leverage standard steering and nonlocality scenarios to provide simple examples of network steering. Next, we outline a surprising effect in which two-way unsteerable states can demonstrate network steering through entanglement swapping, leading to a form of activation. Finally, we characterise some natural scenarios that always admit an NLHS model by identifying properties of the sources. We conclude by listing some promising future avenues for research.

\textit{Basic concepts.---} We first briefly summarise the notion of steering, as it represents the basis of what is to follow. 

In a (bipartite) steering scenario, one party performs measurements on a shared state $\rho^{AB}$, which `steers' the quantum state of the other particle. If Alice performs a set of measurements, labelled by $x$, with outcomes $a$, and corresponding POVM elements $M_{a|x}$, then 
 the collection of sub-normalised `steered states' of Bob are $\sigma_{a|x}^B:=\text{Tr}_A(M_{a|x}^{A} \otimes \mathbbm{1}^B \rho^{AB})$, where $p(a|x) = \text{Tr}(\sigma_{a|x})$ are the statistics of Alice's measurements. The collection of sub-normalised states $\{ \sigma_{a|x} \}_{a,x}$ are commonly referred to as an \textit{assemblage} \cite{pusey2013negativity}. If the assemblage can be explained by a \textit{local hidden state} (LHS) model, of the form $\sigma_{a|x} = \sum_\lambda p(\lambda) ~ p(a|x,\lambda)\sigma_\lambda$, where $\lambda$ is a hidden variable, distributed according to $p(\lambda)$, $\sigma_\lambda$ are `hidden states' of Bob, and $p(a|x,\lambda)$ are local `response functions' of Alice, then we say that it has LHS form, or does not demonstrate steering \cite{Wiseman2007}. If there exist measurements such that $\sigma_{a|x}$ does not admit such an LHS decomposition, we say that the state $\rho^{AB}$ is \textit{steerable} from $A$ to $B$. If for all measurements we can never demonstrate steering with a given state, we say it is unsteerable (from $A$ to $B$) \footnote{Note that steering can be asymmetrical; some states are steerable from Alice to Bob, but not the other way around \cite{Bowles}.}.

\textit{Network Steering.---} We will now introduce our main new notion, that of network steering. Here, we have a collection of independent sources which distribute quantum states to a subset of parties. In the standard network nonlocality scenario all parties are assumed to be untrusted, and to perform `black-box' measurements. Here, in contrast, inspired by the steering scenario, we will consider only a subset of the parties to be untrusted, and the remainder trusted. We will be interested in the (sub-normalised) states that are prepared for the trusted parties by the measurements of the untrusted parties. We refer to this general set-up as \emph{network steering}. 

We focus primarily on a simple scenario, with $n$ parties arranged in a line, where the endpoint parties are  trusted, and intermediate parties are untrusted and each perform a single, fixed measurement. The simplest such scenario has three parties and two sources (see Fig. \ref{fig:bilocal_scenario}), as in entanglement swapping \cite{Ekert}. Here the first two parties share a state $\rho^{AB}$ and the second and third parties share a state $\rho^{B'C}$, and the central party performs a fixed measurement $M^{BB'}_b$. The sub-normalised states prepared for $A$ and $C$ by this measurement are
\begin{equation}
\sigma_{b}^{AC} = \text{Tr}_{BB'}\Big ( \Big [\mathbbm{1}^A \otimes M^{BB'}_b \otimes \mathbbm{1}^C \Big ] \rho^{AB} \otimes \rho^{B'C} \Big ), \label{eq:bilocalquantum}
\end{equation}
which occur with probability $p(b) = \text{Tr}(\sigma_{b}^{AC}).$ We will refer to $\{\sigma_b \}_b$  as a \textit{network assemblage}.

In order to determine when this network assemblage demonstrates network steering we need to introduce the notion of a \textit{network local hidden state} (NLHS) model, which takes the form
\begin{equation}
    \sigma^{AC}_{b}=\sum_{\beta,\gamma}p(\beta)p(\gamma)~p(b|\beta,\gamma) ~ \sigma^{A}_\beta\otimes\sigma^{C}_\gamma, \label{eq:bilocallhs}
\end{equation}
where $\beta$ and $\sigma_\beta^A$ are the hidden variable and hidden states of the first source, $\gamma$ and $\sigma_\gamma^C$ those of the second source, and $p(b|\beta, \gamma)$ the local response function of Bob. If there is no such model that can explain the network assemblage $\sigma_b$, then we say it demonstrates \textit{network steering}.  Interestingly, whereas conventional quantum steering requires multiple measurements to be performed by the untrusted party, just as with network nonlocality, we shall see here that even a fixed measurement can suffice to demonstrate network steering.

\begin{figure}
  \begin{subfigure}[t]{.5\columnwidth}
  \centering
    \caption{\hspace*{32pt}}
    \vspace{-12pt}
  \scalebox{0.8}{\hspace*{-37pt}
  \includegraphics[]{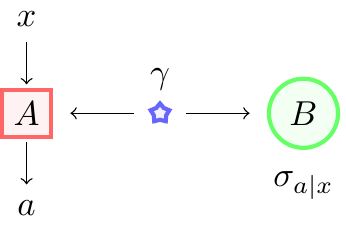}
}
  \label{fig:steer_stand}
\end{subfigure}%
\begin{subfigure}[t]{.5\columnwidth}
  \centering
    \caption{\hspace*{15pt}}
    \vspace{-2pt}
  \scalebox{0.8}{\hspace*{-15pt}
  \includegraphics[]{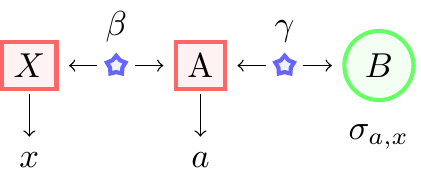}
}
  \label{fig:steer_noinputs}
\end{subfigure}

    \vspace{-6pt}
    
  \begin{subfigure}[t]{.5\columnwidth}
  \centering
    \caption{\hspace*{25pt}}
  \scalebox{0.7}{\hspace*{-37pt}
  \includegraphics[]{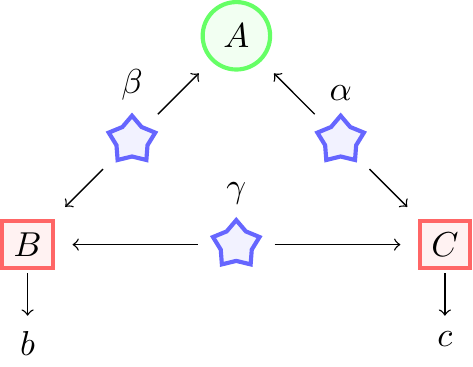}
}
  \label{fig:triangle_scenario}
\end{subfigure}%
\begin{subfigure}[t]{.5\columnwidth}
  \centering
  \vspace{10pt}
    \caption{\hspace*{15pt}}
  \scalebox{0.8}{\hspace*{-25pt}
  \includegraphics[]{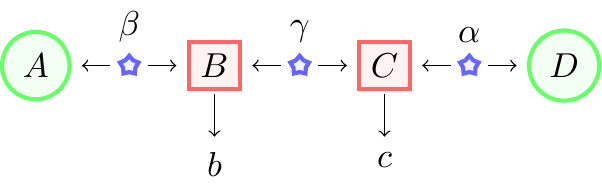}
}
  \label{fig:triangle_line_variable_scenario}
\end{subfigure}

    \vspace{-7pt}

  \begin{subfigure}[t]{.5\columnwidth}
  \centering
    \caption{\hspace*{30pt}}\vspace{-3pt}
\scalebox{0.8}{\hspace*{-37pt}
\vspace{-2pt}
\includegraphics[]{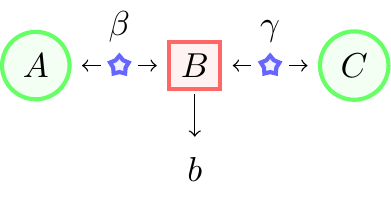}
}
  \label{fig:bilocal_scenario}
\end{subfigure}%
\begin{subfigure}[t]{.5\columnwidth}
  \centering
    \caption{\hspace*{15pt}}
    
\scalebox{0.8}{\hspace*{-18pt}
\includegraphics[]{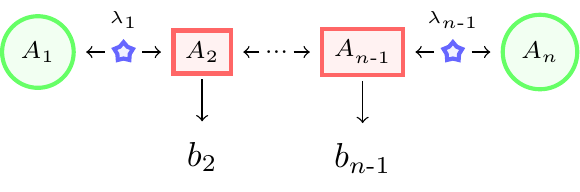}
}
  \label{fig:line_scenario}
\end{subfigure}
\vspace{-15pt}
    \caption{ Network steering scenarios. Green circles represent trusted parties, and red squares represent untrusted parties. (a) Standard steering scenario. (b) Steering scenario without inputs. (c) Triangle scenario with a trusted party. (d) Triangle scenario interpreted as a line. (e) Entanglement swapping scenario with trusted endpoints. (f) Generalised line scenario with trusted endpoints.
    }
        \label{fig:various scenarios}
\end{figure}

We note first in (\ref{eq:bilocallhs}) that each $\sigma_b^{AC}$ is in fact separable. Thus the presence of entanglement in any single $\sigma_b$ suffices to rule out an NLHS model, and therefore demonstrates network steering. 

The above generalises in a natural way to the $n$-party line network depicted in Fig. \ref{fig:line_scenario}, with outcomes $b_2,\dots , b_{n-1}$. We explicitly include the straightforward generalisation of (\ref{eq:bilocalquantum}) and (\ref{eq:bilocallhs}) in  Appendix \ref{app:further-obvs}, and see that the following observation holds generally:

\begin{observation}
For any linear network with trusted endpoints, the entanglement of a single $\sigma_{b_2, \dots, b_{n-1}}$ is sufficient to rule out an NLHS model, and thus demonstrate network steering.
\end{observation}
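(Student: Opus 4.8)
The plan is to reduce the statement to the elementary fact that, for every fixed string of outcomes, an NLHS model assigns to the two trusted endpoints a \emph{separable} (sub-normalised) state, so that the presence of entanglement in a single network assemblage element already excludes every NLHS model. Concretely, label the trusted endpoints $A_1$ and $A_n$ and recall from Appendix~\ref{app:further-obvs} that the line network of Fig.~\ref{fig:line_scenario} carries $n-1$ independent sources with hidden variables $\lambda_1,\dots,\lambda_{n-1}$; source $1$ prepares the hidden state $\sigma^{A_1}_{\lambda_1}$, source $n-1$ prepares $\sigma^{A_n}_{\lambda_{n-1}}$, and the untrusted party $j$ (connected to sources $j-1$ and $j$) has response function $p(b_j|\lambda_{j-1},\lambda_j)$, so the general NLHS model reads
\begin{equation}
\sigma^{A_1A_n}_{b_2,\dots,b_{n-1}} \,=\, \sum_{\vec\lambda}\, p(\vec\lambda)\, p(b_2,\dots,b_{n-1}|\vec\lambda)\;\sigma^{A_1}_{\lambda_1}\otimes\sigma^{A_n}_{\lambda_{n-1}},
\end{equation}
with $\vec\lambda=(\lambda_1,\dots,\lambda_{n-1})$, $p(\vec\lambda)=\prod_{i=1}^{n-1}p(\lambda_i)$ and $p(b_2,\dots,b_{n-1}|\vec\lambda)=\prod_{j=2}^{n-1}p(b_j|\lambda_{j-1},\lambda_j)$.

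The key step is then immediate: fix an outcome string $(b_2,\dots,b_{n-1})$. The coefficients $p(\vec\lambda)\,p(b_2,\dots,b_{n-1}|\vec\lambda)$ are non-negative and sum to $p(b_2,\dots,b_{n-1})\le 1$, so the right-hand side is a non-negatively weighted sum of product operators $\sigma^{A_1}_{\lambda_1}\otimes\sigma^{A_n}_{\lambda_{n-1}}$ — summing over the ``bulk'' variables $\lambda_2,\dots,\lambda_{n-2}$, which feed no state to the endpoints, merely regroups these weights. Hence $\sigma^{A_1A_n}_{b_2,\dots,b_{n-1}}$ is a sub-normalised separable operator, and $\sigma^{A_1A_n}_{b_2,\dots,b_{n-1}}/p(b_2,\dots,b_{n-1})$ is a separable state. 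Consequently, if for some choice of fixed measurements by the untrusted parties the quantum network assemblage (the line-network analogue of \eqref{eq:bilocalquantum}) has even one element $\sigma_{b_2,\dots,b_{n-1}}$ that is entangled across $A_1:A_n$, no NLHS model can reproduce it, and the assemblage demonstrates network steering; the $n=3$ observation made just below \eqref{eq:bilocallhs} is the special case.

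I do not expect a genuine obstacle, since the argument is purely structural; the only point that needs care is to verify — from the generalisation in Appendix~\ref{app:further-obvs} — that in the linear topology exactly the two extremal sources feed quantum states to the trusted parties while every intermediate source enters the model only through a classical response function, which is precisely what makes each $\sigma_{b_2,\dots,b_{n-1}}$ bipartite-separable between the endpoints. (By contrast, the converse direction — certifying or ruling out network steering once all $\sigma_{b_2,\dots,b_{n-1}}$ are separable — is the genuinely difficult problem, handled later via explicit NLHS constructions.)
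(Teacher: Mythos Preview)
Your proposal is correct and follows essentially the same approach as the paper: the paper observes (below \eqref{eq:bilocallhs} and again via the general NLHS form \eqref{eq:app_linelhs} in Appendix~\ref{app:nlhs-on-line}) that every element of an NLHS assemblage is a non-negative mixture of product states $\sigma^{A_1}_{\lambda_1}\otimes\sigma^{A_n}_{\lambda_{n-1}}$ and hence separable, so entanglement of any single $\sigma_{b_2,\dots,b_{n-1}}$ immediately excludes an NLHS model. Your write-up simply makes this structural observation fully explicit.
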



For more general networks, we can represent them as undirected graphs, where each node is either untrusted or trusted, and the edges represent independent sources. If all the parties are untrusted, the quantity of interest is the observed statistics $p(a,b,\dots|x,y,\dots)$. When at least one party is trusted this is replaced by some network assemblage $\sigma_{a,b,\dots|x,y,\dots}$. A key observation that will prove useful is the following equivalence between networks, a generalisation from the network nonlocality case \cite{fritz2012beyond}:

\begin{observation}
Any network with an untrusted party $A$ that has an input $x$, received with probability $p(x)$, and outcome $a$, is equivalent to a network with an additional untrusted party $A'$ who shares an additional source with $A$, neither of whom now has an input.  In this new network, the outcome of $A'$ is $x$, the old input of $A$. The relation between the network assemblages in the first and second scenarios are $p(x)\sigma_{a,\dots|x,\dots}^{A\dots} = \sigma_{a,x,\dots}^{AA'\dots} $.
\end{observation}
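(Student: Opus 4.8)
The plan is to prove this by constructing an explicit dictionary between the two scenarios and checking that hidden-variable models (NLHS models) map to one another under it, exactly mirroring the standard Fritz-type argument \cite{fritz2012beyond} adapted to the trusted/untrusted hybrid setting. First I would set up the ``forward'' direction: given any network with an untrusted party $A$ holding input $x$ (with prior $p(x)$) and outcome $a$, I would build the new network by adjoining a fresh untrusted party $A'$ and a new bipartite source shared only between $A$ and $A'$. The state of that source is chosen so that measuring its $A'$-half in a fixed basis yields the classical value $x$ with probability $p(x)$ — concretely, the source emits $\sum_x \sqrt{p(x)}\,\ket{x}\ket{x}$ (or the classically correlated version $\sum_x p(x)\ketbra{x}{x}\otimes\ketbra{x}{x}$), $A'$ measures in the computational basis to output $x$, and $A$ reads off the received label and then applies its original measurement $M_{a|x}$ conditioned on it. A direct trace computation then gives that the new network assemblage satisfies $\sigma^{AA'\dots}_{a,x,\dots} = p(x)\,\sigma^{A\dots}_{a,\dots|x,\dots}$, which is the claimed relation.

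Next I would handle the ``backward'' direction, which is what makes this an equivalence rather than a one-way implication: given any strategy in the second (no-input) network, the party $A'$'s outcome $x$ together with $A$'s outcome $a$ can be reinterpreted — $A$ simply treats the neighbouring source's output label as an input — so that marginalising over the $A$--$A'$ source and reinstating $x$ as an input recovers a strategy for the first network with $p(x) = \mathrm{Tr}(\sigma^{AA'\dots}_{a,x,\dots})$ summed over $a$ and the remaining data. One must check this is well-defined, i.e. that $p(x)$ so obtained does not depend on the other parties' data because the $A$--$A'$ source is independent of all the rest; this independence is exactly the network (source-independence) assumption, so it goes through.

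The crux of the statement, though, is not the quantum realisability but the claim that \emph{NLHS models} are in correspondence, so I would devote the main effort to that. In the forward direction, an NLHS model for $\sigma^{A\dots}_{a,\dots|x,\dots}$ has some hidden variables (one per source) and response functions for the untrusted parties; I would extend it to the new network by giving the new $A$--$A'$ source a trivial/deterministic hidden variable, letting $A'$'s response function output $x$ according to $p(x)$, and letting $A$'s response function be the old one with its input slot now fed by the value it receives. Source-independence of the NLHS ansatz in Eq.~\eqref{eq:bilocallhs} is preserved because we have only added one more independent hidden variable. Conversely, from an NLHS model of $\sigma^{AA'\dots}_{a,x,\dots}$ I would integrate out the hidden variable of the $A$--$A'$ source together with $A'$'s response, absorbing them into the definition of $p(x)$ and into $A$'s conditional response function; again independence of that source from the others is what guarantees the resulting object factorises correctly as an NLHS model of the original network, with the trusted parties' hidden states untouched throughout.

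The main obstacle I anticipate is bookkeeping rather than conceptual: one has to be careful that ``untrusted party with an input'' versus ``untrusted party without an input'' is handled uniformly when $A$ is adjacent to both trusted and untrusted neighbours, and that the factor $p(x)$ is attached on the correct side of the equality and is genuinely a marginal (well-defined independently of the rest) — this is precisely where the no-signalling/source-independence structure is used and where a sloppy argument would break. A secondary subtlety is ensuring the construction does not secretly require $A'$ to be trusted: since $A'$ only ever outputs a classical label and needs no quantum description on the verifier's side, keeping it untrusted is consistent, and the fixed-measurement (no-input) condition on $A'$ is met by construction. Once these points are pinned down, both directions are short trace identities plus a relabelling of hidden variables, so I would present the forward map in detail and note the backward map is obtained by reversing each step.
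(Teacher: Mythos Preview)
Your proposal is correct and in fact considerably more detailed than what the paper offers: Observation~2 is stated in the paper without proof, merely referenced as ``a generalisation from the network nonlocality case \cite{fritz2012beyond}''. The explicit forward construction you describe (the classically correlated source $\sum_x p(x)\ketbra{x}{x}\otimes\ketbra{x}{x}$, with $A'$ measuring in the computational basis and $A$ applying $M_{a|x}$ conditioned on the label) does appear verbatim in the paper, but only later, as a concrete example following Claims~1 and~2 rather than as a proof of the observation itself; similarly, Claims~1 and~2 are precisely the NLHS-correspondence argument you outline, specialised to the bilocal case and in one direction each. So your approach is the same as the paper's implicit one, just carried out in the general setting and with both directions made explicit---which is exactly what a proof of the observation requires.
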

By virtue of the fact that quantum mechanics admits local tomography, we also note the following:

\begin{observation}
A trusted party connected to $n$ independent sources can without loss of generality be replaced by $n$ endpoint trusted parties, each connected to a single source.
\end{observation}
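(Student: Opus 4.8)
The plan is to show that the scenario in which a single trusted party $T$ holds the subsystems $T_1,\dots,T_n$ arriving from $n$ independent sources $S_1,\dots,S_n$, and the scenario in which these subsystems are instead held by $n$ distinct trusted endpoint parties (with $T_i$ connected only to $S_i$), generate \emph{the same} set of network assemblages, and moreover that an assemblage admits an NLHS model in one scenario precisely when it does in the other. Granting both, network steering is demonstrated in one scenario exactly when it is in the other, which is the asserted ``without loss of generality''. I would state everything for $T$ (respectively the $T_i$) embedded as a subgraph of an otherwise arbitrary network, collecting the untrusted parties' inputs and outputs into a single label $\vec d$ occurring with probability $p(\vec d)$.

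First I would handle the quantum side. The network assemblage element seen by $T$ is a sub-normalised operator $\sigma_{\vec d}$ on $\mathcal H_{T_1}\otimes\cdots\otimes\mathcal H_{T_n}$, obtained by applying the untrusted parties' measurement operators to the global state and tracing out all systems other than $T$'s. In the split scenario the $n$ endpoints may only measure locally, so a priori they recover only the statistics $\text{Tr}[(M^{(1)}_{a_1}\otimes\cdots\otimes M^{(n)}_{a_n})\sigma_{\vec d}]$ of product measurements; by each running a tomographically complete set of local measurements and recording the joint outcome distributions they obtain all of these numbers, including the normalisation $p(\vec d)=\text{Tr}(\sigma_{\vec d})$. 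The step that does the real work is the appeal to local tomography: product observables span the Hermitian operators on a tensor product, so those statistics determine $\sigma_{\vec d}$ uniquely and the split endpoints reconstruct exactly the operator available to $T$. I expect this to be the only genuinely non-trivial point, and it is where the quantum-specific input enters --- in a theory without local tomography (e.g.\ real-amplitude quantum mechanics) the separated parties would have strictly less information and the equivalence would fail.

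It then remains to match the NLHS models, which is largely a matter of unwinding definitions. In the unsplit scenario an NLHS model equips each source $S_i$ with an independent hidden variable $\lambda_i$ distributed according to some $p_i$; since the only channel from $S_i$ into $T$ is the wire carrying $T_i$, the hidden state delivered to $T$ must take the product form $\bigotimes_i \sigma^{T_i}_{\lambda_i}$ --- source independence forbids any label shared across the $T_i$ --- while the untrusted response functions may depend on all of $\lambda_1,\dots,\lambda_n$. This is precisely the NLHS decomposition one writes in the split scenario, with $T_i$ receiving $\sigma^{T_i}_{\lambda_i}$ from $S_i$; conversely, any NLHS model for the split endpoints is immediately one for $T$. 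Hence $\sigma_{\vec d}$ admits an NLHS model in one scenario if and only if it does in the other, which completes the equivalence.
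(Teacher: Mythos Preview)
Your proof is correct and follows the paper's approach: the paper offers no detailed argument beyond the one-line remark that the observation holds ``by virtue of the fact that quantum mechanics admits local tomography,'' which is exactly the mechanism you identify and spell out. Your additional verification that the NLHS models coincide in the split and unsplit pictures is not made explicit in the paper but is implicit in the ``without loss of generality'' phrasing, and your treatment of it is sound.
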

\noindent This allows us, for example, to interpret linear networks as rings with a single trusted party -- e.g.~the four party linear network with trusted endpoints can also be viewed as the triangle network 
where one of the parties is trusted, as in Figs.~\ref{fig:triangle_scenario} and \ref{fig:triangle_line_variable_scenario}. This observation motivates our choice to focus our discussion on linear networks, which we understand now to be relevant for more complex, non-linear networks. We detail further basic observations in Appendix \ref{app:further-obvs}.

\textit{Demonstrating Network Steering.---} We now begin our exploration of demonstrating network steering, and explain how and when steerable states will lead to network steering when placed in a network. We consider first the scenario of Fig. \ref{fig:bilocal_scenario}. If one source distributes a state which is steerable in the standard steering scenario, 
then Observation 2 would seem to indicate that even if the second source distributes only separable states (which we will refer to as a \emph{separable source}), it should still be possible to use this to encode `the input' to the measurement, and thus demonstrate network steering. Here we make this intuition precise.

Consider network scenario depicted in Fig.~\ref{fig:steer_noinputs}, with two untrusted parties without inputs steering a third, leading to a network assemblage $\sigma_{a,x}$. Here the NLHS condition reads
\begin{equation}
	\sigma_{a,x} = \sum_{\beta, \gamma} p(\beta) p(\gamma) ~ p(x|\beta) p(a|\beta, \gamma) \sigma_\gamma. \label{eq:steernoinputslhs}
\end{equation}
We can then observe the following:
\begin{claim}
	If $\sigma_{a,x}$ has an NLHS model, then $\sigma_{a|x} := \sigma_{a,x}/p(x)$ has an LHS model, where $p(x) = \text{Tr}\sum_a \sigma_{a,x}$.
\end{claim}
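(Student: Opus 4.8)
The plan is to start from an NLHS model for $\sigma_{a,x}$ of the form in Eq.~(\ref{eq:steernoinputslhs}) and massage it into a standard LHS model for $\sigma_{a|x} = \sigma_{a,x}/p(x)$. The key structural feature to exploit is that in Eq.~(\ref{eq:steernoinputslhs}) the variable $\gamma$ (and hence the hidden state $\sigma_\gamma$ of the trusted party) is statistically independent of $x$: the response function producing the ``input'' $x$ depends only on $\beta$. So I would first compute $p(x) = \text{Tr}\sum_a \sigma_{a,x} = \sum_\beta p(\beta) p(x|\beta)$, using $\sum_a p(a|\beta,\gamma)=1$ and $\text{Tr}\,\sigma_\gamma = 1$. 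Dividing Eq.~(\ref{eq:steernoinputslhs}) by $p(x)$ then gives
\begin{equation}
	\sigma_{a|x} = \sum_{\beta,\gamma} \frac{p(\beta) p(x|\beta)}{p(x)}\, p(\gamma)\, p(a|\beta,\gamma)\, \sigma_\gamma. \label{eq:proofplan}
\end{equation}

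Next I would identify the new hidden variable as the pair $\lambda = (\beta,\gamma)$, with distribution $p(\lambda) = p(\beta)p(\gamma)$, hidden states $\sigma_\lambda := \sigma_\gamma$, and response function $p(a|x,\lambda) := p(a|\beta,\gamma)$. The only subtlety is the factor $p(\beta)p(x|\beta)/p(x)$ in Eq.~(\ref{eq:proofplan}), which should be recognised as the conditional distribution $p(\beta|x)$ obtained by Bayes' rule. To fit the LHS template $\sigma_{a|x} = \sum_\lambda p(\lambda)\, p(a|x,\lambda)\,\sigma_\lambda$, I would absorb the $x$-dependence of this weight into the response function: define $\tilde p(a|x,\lambda) := \frac{p(x|\beta)}{p(x)} p(a|\beta,\gamma)$ — but this is not normalised in $a$, so instead the cleaner route is to note that $\sum_\lambda p(\lambda)\,[p(x|\beta)/p(x)] = 1$ for each $x$, so that $q_x(\lambda) := p(\lambda)\,p(x|\beta)/p(x)$ is a legitimate probability distribution over $\lambda$ for each $x$; one can then either allow the hidden-variable distribution to carry a (harmless) dependence on $x$, or, more carefully, re-derive a genuinely $x$-independent LHS by the standard trick of enlarging the hidden variable to also carry a copy of the relevant data. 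I would spell out whichever convention the paper adopts for the LHS definition and check it matches Eq.~(\ref{eq:steernoinputslhs})'s reading verbatim.

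The main obstacle — really the only one — is this bookkeeping of where the $x$-dependence of the weight $p(\beta|x)$ is allowed to live in the LHS decomposition. Since the definition of an LHS model given earlier in the excerpt writes $\sigma_{a|x} = \sum_\lambda p(\lambda)\, p(a|x,\lambda)\,\sigma_\lambda$ with $p(\lambda)$ manifestly independent of $x$, I expect the honest fix is to observe that $p(a|\beta,\gamma)$ can be replaced by the normalised response $p(a|x,\beta,\gamma) := p(a|\beta,\gamma)$ while the offending ratio is handled by noting it equals $p(\beta|x)$, and then invoking the standard fact that any decomposition of the assemblage with an $x$-dependent weight of the form $p(\lambda|x)$ times a normalised response can be rewritten as a bona fide LHS model (this is routine and is exactly the ``wiring'' argument behind Observation~2). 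Everything else — the trace computations, the separability/positivity of $\sigma_\gamma$ being inherited, the sum manipulations — is immediate, so the write-up should be short.
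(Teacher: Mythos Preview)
Your setup and the computation of $p(x) = \sum_\beta p(\beta)p(x|\beta)$ are correct, and you correctly identify that the hidden states $\sigma_\gamma$ are independent of $x$. But you then take an unnecessary detour by keeping $\lambda = (\beta,\gamma)$ and worrying about the $x$-dependent weight $p(\beta|x)p(\gamma)$. The ``standard fact'' you invoke --- that an $x$-dependent hidden-variable weight $p(\lambda|x)$ can always be converted into a bona fide LHS model --- is \emph{false} in general: that is precisely the distinction between LHS and non-LHS assemblages. It happens to be salvageable here, but only because of the specific structure you have already noted (namely that $\sigma_\lambda = \sigma_\gamma$ does not involve $\beta$), not by any general wiring argument or by ``allowing'' a harmless $x$-dependence in $p(\lambda)$.

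The paper's proof exploits that structure directly and avoids the bookkeeping entirely: since $\sigma_\gamma$ does not depend on $\beta$, one simply marginalises $\beta$ into the response function. Define
\[
p(a|x,\gamma) := \frac{1}{p(x)}\sum_\beta p(\beta)\,p(x|\beta)\,p(a|\beta,\gamma),
\]
which is properly normalised in $a$ (because $\sum_a p(a|\beta,\gamma)=1$ and $\sum_\beta p(\beta)p(x|\beta)=p(x)$). Then $\sigma_{a|x} = \sum_\gamma p(\gamma)\,p(a|x,\gamma)\,\sigma_\gamma$ is an LHS model with hidden variable $\gamma$ alone and a manifestly $x$-independent prior $p(\gamma)$. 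No enlarged hidden variable, no $x$-dependent weights, no appeal to external facts --- the step you flagged as the ``only obstacle'' simply disappears once you sum over $\beta$ first.
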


\begin{proof}
	We can write (\ref{eq:steernoinputslhs}) as
	\begin{equation}
		\sigma_{a,x}  = p(x)\sum_{ \gamma}  p(\gamma) ~ p(a|x, \gamma), \sigma_\gamma
	\end{equation}
	where $p(x):=\text{Tr}(\sum_a \sigma_{a,x} ) = \sum_\beta p(\beta)p(x|\beta)$ and $p(a|x,\gamma) := \frac{1}{{p(x)}} \sum_\beta p(\beta)p(x|\beta)p(a|\beta, \gamma)$. 
	The result then follows.
\end{proof}

\noindent This is an analogous result to that proved in \cite{fritz2012beyond} relating Bell scenario statistics $p(a,b|x,y)$ to network nonlocality statistics $p(a,b,x,y)$, the corresponding distribution without inputs. We link this to the scenario from Fig.~\ref{fig:bilocal_scenario} where both endpoints are trusted.

\begin{claim}
	If $\sigma_b$ has an NLHS model, then $\sigma_{b, x} := \text{Tr}_A([M_x^A \otimes \mathbbm{1}^C ]\sigma_b )$ has an  NLHS model, for any measurement $M_x$. 
\end{claim}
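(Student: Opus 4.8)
\emph{Proof plan.} The plan is to take the NLHS decomposition of $\sigma_b$ guaranteed by the hypothesis and push it through the map $\sigma \mapsto \text{Tr}_A([M_x^A \otimes \mathbbm{1}^C]\sigma)$, checking that the resulting object inherits the required factorised structure. Concretely, starting from
\begin{equation}
\sigma_b^{AC} = \sum_{\beta,\gamma} p(\beta)p(\gamma)\, p(b|\beta,\gamma)\, \sigma_\beta^A \otimes \sigma_\gamma^C, \nonumber
\end{equation}
I would apply the channel, use linearity, and use $\text{Tr}_A\big([M_x^A \otimes \mathbbm{1}^C](\sigma_\beta^A \otimes \sigma_\gamma^C)\big) = \text{Tr}(M_x^A \sigma_\beta^A)\, \sigma_\gamma^C$ to obtain
\begin{equation}
\sigma_{b,x}^C = \sum_{\beta,\gamma} p(\beta)p(\gamma)\, p(b|\beta,\gamma)\, \text{Tr}(M_x^A \sigma_\beta^A)\, \sigma_\gamma^C. \nonumber
\end{equation}

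The only step with real content is recognising that the scalar $\text{Tr}(M_x^A \sigma_\beta^A)$ is a legitimate response function for the newly created untrusted party (call it $A'$) with outcome $x$: setting $p(x|\beta) := \text{Tr}(M_x^A \sigma_\beta^A)$, positivity of $M_x^A$ and of the hidden state $\sigma_\beta^A$ gives $p(x|\beta) \geq 0$, while $\sum_x M_x^A = \mathbbm{1}^A$ together with normalisation of $\sigma_\beta^A$ gives $\sum_x p(x|\beta) = 1$. Crucially this response function depends only on $\beta$, the hidden variable of the source shared between $A$ and the central party, exactly as required for the network of Fig.~\ref{fig:steer_noinputs}. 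Substituting back yields
\begin{equation}
\sigma_{b,x}^C = \sum_{\beta,\gamma} p(\beta) p(\gamma)\, p(x|\beta)\, p(b|\beta,\gamma)\, \sigma_\gamma^C, \nonumber
\end{equation}
which is precisely an NLHS model of the form (\ref{eq:steernoinputslhs}) (with the central outcome labelled $b$ in place of $a$).

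I do not expect a genuine obstacle here; the only thing to be careful about is the bookkeeping of which hidden variable each response function and each hidden state is permitted to depend on, so that the product structure matching the corresponding network graph is respected. If one does not wish to assume the hidden states $\sigma_\beta^A$ appearing in the original model are normalised, one can first rescale each $\sigma_\beta^A$ to unit trace and absorb the trace factor into the weight $p(\beta)$ before defining $p(x|\beta)$; the conclusion is unchanged.
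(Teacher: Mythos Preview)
Your proposal is correct and follows essentially the same route as the paper: insert the NLHS decomposition of $\sigma_b$, apply the partial trace with $M_x^A$, and identify $p(x|\beta):=\text{Tr}(M_x^A\sigma_\beta^A)$ as the new response function to recover the form of \eqref{eq:steernoinputslhs}. Your version is in fact slightly more thorough, since you explicitly verify positivity and normalisation of $p(x|\beta)$ and note the correct dependence on the hidden variable $\beta$ alone.
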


\begin{proof}
	When $\sigma_{b}$ has an NLHS model of the form \eqref{eq:bilocallhs}, it follows that
	\begin{equation}\label{eq:nLHStoLHS}
		\sigma_{b, x} = \sum_{\beta, \gamma} p(\beta) p(\gamma) ~ \text{Tr}(M_x \sigma_\beta) p(b|\beta, \gamma) \sigma_\gamma,
\end{equation}
which is an NLHS model of the form \eqref{eq:steernoinputslhs}, with $p(x|\beta) := \text{Tr}(M_x\sigma_\beta)$.
$\qedhere$
\end{proof}

Putting this together, suppose that $\rho^{B'C}$ is steerable, such that $\sigma_{b|x}:=\text{Tr}([M_{b|x}\otimes \mathbbm{1}] \rho^{B'C})$ demonstrates steering for some $M_{b|x}$. Let $\rho^{AB} = \sum_x \frac{1}{d} \ketbra{x}{x}\otimes \ketbra{x}{x}$ where $d$ is the number of measurements $x$, and $\{\ket{x}\}_x$ form an orthonormal basis, and $M_b = \sum_{x'} \ketbra{x'}{x'} \otimes M_{b|x'}$. The resulting network assemblage $\sigma_b$, from \eqref{eq:bilocalquantum}, is seen to be
\begin{equation}
	\sigma_b 
	= \sum_x \frac{1}{d} \ketbra{x}{x} \otimes \sigma_{b|x}. 
\end{equation}
Now, from the above claims we can see that this must demonstrate network steering. Indeed, if instead it had an NLHS model, then from Claim 2, $\sigma_{b,x}:=\text{Tr}_A([\ketbra{x}{x} \otimes \mathbbm{1}^C ] \sigma_b ) = \frac{1}{d}\sigma_{b|x}$ would have an  NLHS model with $p(x) = 1/d$. Then, from Claim 1, $\sigma_{b,x}$ would have an LHS model, but by assumption it does not. This shows that all steerable states lead also to network steering when placed in a network with an appropriate separable state. Interestingly, this occurs even though $\sigma_b$ is separable.

Similar arguments apply for showing that in the line with four parties from Fig.~\ref{fig:triangle_line_variable_scenario}, we can always demonstrate network steering when the central state is nonlocal, and the adjacent endpoint sources are suitable separable states, providing the inputs. That is, if $\sigma_{b,c}$ has an NLHS model, then by $A$ and $D$ applying measurements $M_x$ and $M_y$ the associated probability distributions $p(b,c,x,y)$ and $p(b,c|x,y)$ necessarily have NLHV and LHV models respectively (see \cite{fritz2012beyond}). So for any nonlocal central source, we can find appropriate measurements and adjacent separable sources such that $\sigma_{b,c}$ demonstrates network steering.

\textit{Activation.---} The above constructions of network steering relied on steering or nonlocality in standard scenarios. Here we show that network steering is possible even when using only (two-way) unsteerable states, which can be viewed as a form of activation. Note that this complements previous examples of activation of steering in the standard bipartite scenario \cite{quintino2016superactivation}.

We define the Doubly-Erased Werner (DEW) state as the two-qubit Werner state after both subsystems have undergone an identical erasure channel:
\begin{equation}
	\rho_\text{\scalebox{.7}[1.0]{\tiny DEW}}(\eta, \omega) := \Lambda_\eta \otimes \Lambda_\eta \bigg ( \omega \ketbra{\psi^-}{\psi^-} + (1-\omega)\frac{\mathbbm{1}}{4} \bigg ),
\end{equation}
where $\ket{\psi^-} = (\ket{01} - \ket{10})/\sqrt{2}$, and $\Lambda_\eta (\rho) = \eta\rho + (1-\eta) \ketbra{2}{2}$, where $\ket{2}$ represents the loss of the system. $\rho_\text{\scalebox{.7}[1.0]{\tiny DEW}}(\eta, \omega)$ is entangled when $\omega > \frac{1}{3}$ (and $\eta \neq 0$), and is unsteerable (in both directions) when 
$\eta \leq \frac{2}{3} (1- \omega)$.
This follows from \cite{tischler2018conclusive}, as for any state $\rho^{AB}$ unsteerable from Alice to Bob, we have that $\mathbbm{1}^A\otimes \Omega^B [ \rho^{AB} ]$ is also unsteerable from $A$ to $B$, for any channel $\Omega$ \cite{quintino2015inequivalence}. Note also that in the context of entanglement swapping, projecting two copies of $\rho_\text{\scalebox{.7}[1.0]{\tiny DEW}}(\eta, \omega)$ onto $\ketbra{\psi^-}{\psi^-}$ leads to $\rho_\text{\scalebox{.7}[1.0]{\tiny DEW}}(\eta, \omega^2)$ (with probability $\eta^2/4$), that is to a DEW state with squared visibility (See Appendix \ref{app:ew-dew} for details).

Consider now the line network from Fig.~\ref{fig:line_scenario} with each source distributing a copy of  $\rho_\text{\scalebox{.7}[1.0]{\tiny DEW}}(\eta, \omega)$, and all untrusted parties performing the fixed measurement  $M_0 = \ketbra{\psi^-}{\psi^-}$, $M_1 =  \mathbbm{1}-\ketbra{\psi^-}{\psi^-}$, leading to the network assemblage $\sigma_{b_2,\ldots,b_{n-1}}$. Now, if we choose $\eta = \frac{2}{3}(1-\omega)$ and $1>\omega>(\frac{1}{3})^{\frac{1}{n}}$, then each DEW is entangled but unsteerable, and we find, due to the entanglement-swapping property noted above, that the element $\sigma_{0,\ldots,0}$ (corresponding to a successful swap in each case), will be proportional to the state $\rho_\text{\scalebox{.7}[1.0]{\tiny DEW}}(\eta, \omega')$ with $\omega' > \frac{1}{3}$, and therefore entangled. From Observation 1, this precludes an NLHS model description, and therefore demonstrates network steering, even though each DEW state was unsteerable.

\begin{figure}[t]
	\centering
	\begin{subfigure}{.5\textwidth}
		\centering
		\caption{\vspace{-24pt} \hspace*{200pt}}
		\label{fig:lhs_bilocal}
		\hspace*{-5pt}\includegraphics[]{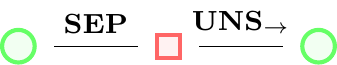}
	\end{subfigure}
	\vspace{0.2cm}
	
	\begin{subfigure}{.5\textwidth}
		\centering
		\caption{\vspace{-48pt} \hspace*{200pt}}
		\label{fig:lhs_triangle}
		\includegraphics[]{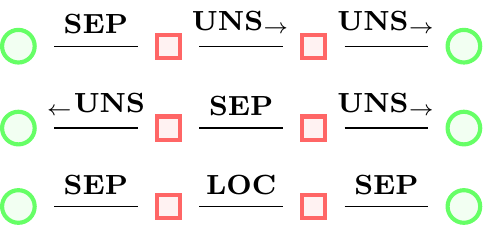}		
	\end{subfigure}
	\vspace{0.2cm}
	
	
	\caption{
	Classifying the structure of some NLHS models. Green circles represent trusted parties, and red squares represent untrusted parties who perform a fixed measurement. (a) In the scenario of Fig.~\ref{fig:bilocal_scenario}, when one source is separable ($\mathbf{SEP}$), this acts as an input to the adjacent measurements, and by taking the second source as unsteerable ($\mathbf{UNS}_\rightarrow$) then this always leads to an NLHS model. (b)  Similar results hold in the ``unwrapped'' triangle scenario (Fig.~\ref{fig:triangle_line_variable_scenario}), where now sources can also be taken as local, ($\mathbf{LOC}$). We expand and detail this further in Appendix \ref{app:nlhs-on-line}.}  
	\label{fig:lhsmodels}
\end{figure}

\textit{Simple NLHS models.---} We finish our exploration by considering to what extent the properties of the quantum sources directly affect the possibility of an NLHS model. We will refer to a source as being separable, unsteerable or local if it is only capable of generating separable, unsteerable or local states respectively. As an illustrative example, in the three-party scenario of Fig.~\ref{fig:bilocal_scenario} if one source is separable and the other source is unsteerable (towards the trusted party), then for any fixed central measurement the network assemblage $\sigma_b$ will always be NLHS. Indeed taking $\rho^{AB} = \sum_\gamma p(\gamma) \sigma^A_\gamma \otimes \sigma^B_\gamma$, and inserting into (\ref{eq:bilocalquantum}) gives
\begin{equation}
\sigma_{b} = \sum_\gamma p(\gamma)~ \sigma_\gamma^A \otimes \text{Tr}_{BB'}\Big ( \Big [ M_b \otimes \mathbbm{1}^C \Big ] \sigma^B_\gamma \otimes  \rho^{B'C} \Big ).
\end{equation}
Defining  $M_{b|\gamma} := \text{Tr}_{B} (M_b [ \sigma^B_\gamma \otimes \mathbbm{1}^{B'}  ] )$ which form a set of valid measurement operators leads us to write
\begin{equation}
    \sigma_b = \sum_\gamma p(\gamma)~ \sigma_\gamma^A \otimes \text{Tr}_{B'}\Big ( \Big [ M_{b|\gamma}\otimes \mathbbm{1}^C \Big ]  \rho^{B'C}  \Big ).
\end{equation}
If $\rho^{B'C}$ is unsteerable from $B'$ to $C$, this allows us to extract a LHS model, yielding
\begin{align}
    \sigma_b &= \sum_\gamma p(\gamma)~ \sigma_\gamma^A \otimes \bigg ( \sum_\lambda p(\lambda)~p(b|\lambda, \gamma) \sigma_\lambda^C  \bigg ) \\
    &= \sum_{\gamma, \lambda} p(\gamma)p(\lambda) ~ p(b|\lambda, \gamma)\sigma_\gamma^A  \otimes \sigma_\lambda^C , 
\end{align}
which is an NLHS model (\ref{eq:bilocallhs}). Hence the combination of a separable and unsteerable source (to the trusted party) can never lead to network steering, as shown in Fig.~\ref{fig:lhs_bilocal}.

Similar results follow in more complicated scenarios. In Fig.~\ref{fig:lhsmodels} (b) we give the three configurations which always lead to NLHS models in the (unwrapped) triangle scenario of Fig.~\ref{fig:triangle_line_variable_scenario}, and we give further generalisations for the line scenario of Fig.~\ref{fig:line_scenario} in Appendix \ref{app:nlhs-on-line}. The main concept behind all of these results is that separable and unsteerable sources provide a form of input, allowing us to write down large classes of non-trivial NLHS models.

\textit{Conclusions.---} We have introduced the notions of network steering and network local hidden state models. We discussed illustrative examples, and showed that the network scenario leads to a form of activation of steering. Finally, we have started a characterisation of NLHS models based solely upon properties of the sources. There are many fascinating and novel future questions to tackle.

First, it would be interesting to determine if either NLHS assemblages or the full set of network assemblages can be characterized via techniques based on semi-definite programming, using for instance the approach of \cite{Wolfe2021}. A related direction is to further classify NLHS models based on the properties of the sources. For instance, consider four parties sharing \textit{separable}, \textit{local} and \textit{unsteerable} sources, or five parties sharing \textit{separable}, \textit{local}, \textit{local}, and \textit{separable} sources. In neither of these cases do we currently know if network steering can arise or not. 




Here we have focused primarily on the properties of the sources, but it would also be interesting to consider the measurements, and understand which of their properties (e.g.~entanglement or incompatiblity) are relevant for network steering. Future work could also consider the significance of our work for quantum repeaters \cite{wehner2018quantum}, explore links with superactivation of quantum steering \cite{quintino2016superactivation}, or extend recent work on post-quantum steering \cite{sainz2015postquantum} to this setting.

Finally, our initial motivation for this work was to attempt to gain clarity on network nonlocality problems, such as those in the triangle network. It is our hope that developing our framework further will lead to discovering novel nonlocal correlations, unique to networks.

\textit{Acknowledgements.---} We thank Marco Túlio Quintino for helpful discussions. BDMJ acknowledges support from UK EPSRC (EP/SO23607/1); PS from a Royal Society URF (UHQT); IS, RU and NB from the Swiss National Science Foundation (project 2000021 192244/1 and NCCR SwissMap). \\

\bibliographystyle{apsrev4-1}
\bibliography{bib.bib}

\begin{thebibliography}{41}%
\makeatletter
\providecommand \@ifxundefined [1]{%
 \@ifx{#1\undefined}
}%
\providecommand \@ifnum [1]{%
 \ifnum #1\expandafter \@firstoftwo
 \else \expandafter \@secondoftwo
 \fi
}%
\providecommand \@ifx [1]{%
 \ifx #1\expandafter \@firstoftwo
 \else \expandafter \@secondoftwo
 \fi
}%
\providecommand \natexlab [1]{#1}%
\providecommand \enquote  [1]{``#1''}%
\providecommand \bibnamefont  [1]{#1}%
\providecommand \bibfnamefont [1]{#1}%
\providecommand \citenamefont [1]{#1}%
\providecommand \href@noop [0]{\@secondoftwo}%
\providecommand \href [0]{\begingroup \@sanitize@url \@href}%
\providecommand \@href[1]{\@@startlink{#1}\@@href}%
\providecommand \@@href[1]{\endgroup#1\@@endlink}%
\providecommand \@sanitize@url [0]{\catcode `\\12\catcode `\$12\catcode
  `\&12\catcode `\#12\catcode `\^12\catcode `\_12\catcode `\%12\relax}%
\providecommand \@@startlink[1]{}%
\providecommand \@@endlink[0]{}%
\providecommand \url  [0]{\begingroup\@sanitize@url \@url }%
\providecommand \@url [1]{\endgroup\@href {#1}{\urlprefix }}%
\providecommand \urlprefix  [0]{URL }%
\providecommand \Eprint [0]{\href }%
\providecommand \doibase [0]{http://dx.doi.org/}%
\providecommand \selectlanguage [0]{\@gobble}%
\providecommand \bibinfo  [0]{\@secondoftwo}%
\providecommand \bibfield  [0]{\@secondoftwo}%
\providecommand \translation [1]{[#1]}%
\providecommand \BibitemOpen [0]{}%
\providecommand \bibitemStop [0]{}%
\providecommand \bibitemNoStop [0]{.\EOS\space}%
\providecommand \EOS [0]{\spacefactor3000\relax}%
\providecommand \BibitemShut  [1]{\csname bibitem#1\endcsname}%
\let\auto@bib@innerbib\@empty
\bibitem [{\citenamefont {Brunner}\ \emph {et~al.}(2014)\citenamefont
  {Brunner}, \citenamefont {Cavalcanti}, \citenamefont {Pironio}, \citenamefont
  {Scarani},\ and\ \citenamefont {Wehner}}]{brunner2014bell}%
  \BibitemOpen
  \bibfield  {author} {\bibinfo {author} {\bibfnamefont {N.}~\bibnamefont
  {Brunner}}, \bibinfo {author} {\bibfnamefont {D.}~\bibnamefont {Cavalcanti}},
  \bibinfo {author} {\bibfnamefont {S.}~\bibnamefont {Pironio}}, \bibinfo
  {author} {\bibfnamefont {V.}~\bibnamefont {Scarani}}, \ and\ \bibinfo
  {author} {\bibfnamefont {S.}~\bibnamefont {Wehner}},\ }\href {\doibase
  10.1103/RevModPhys.86.419} {\bibfield  {journal} {\bibinfo  {journal} {Rev.
  Mod. Phys.}\ }\textbf {\bibinfo {volume} {86}},\ \bibinfo {pages} {419}
  (\bibinfo {year} {2014})}\BibitemShut {NoStop}%
\bibitem [{\citenamefont {Cavalcanti}\ and\ \citenamefont
  {Skrzypczyk}(2016)}]{cavalcanti2016quantum}%
  \BibitemOpen
  \bibfield  {author} {\bibinfo {author} {\bibfnamefont {D.}~\bibnamefont
  {Cavalcanti}}\ and\ \bibinfo {author} {\bibfnamefont {P.}~\bibnamefont
  {Skrzypczyk}},\ }\href {\doibase 10.1088/1361-6633/80/2/024001} {\bibfield
  {journal} {\bibinfo  {journal} {Rep. Prog. Phys.}\ }\textbf {\bibinfo
  {volume} {80}},\ \bibinfo {pages} {024001} (\bibinfo {year}
  {2016})}\BibitemShut {NoStop}%
\bibitem [{\citenamefont {Uola}\ \emph {et~al.}(2020)\citenamefont {Uola},
  \citenamefont {Costa}, \citenamefont {Nguyen},\ and\ \citenamefont
  {G\"uhne}}]{uola2020quantum}%
  \BibitemOpen
  \bibfield  {author} {\bibinfo {author} {\bibfnamefont {R.}~\bibnamefont
  {Uola}}, \bibinfo {author} {\bibfnamefont {A.~C.~S.}\ \bibnamefont {Costa}},
  \bibinfo {author} {\bibfnamefont {H.~C.}\ \bibnamefont {Nguyen}}, \ and\
  \bibinfo {author} {\bibfnamefont {O.}~\bibnamefont {G\"uhne}},\ }\href
  {\doibase 10.1103/RevModPhys.92.015001} {\bibfield  {journal} {\bibinfo
  {journal} {Rev. Mod. Phys.}\ }\textbf {\bibinfo {volume} {92}},\ \bibinfo
  {pages} {015001} (\bibinfo {year} {2020})}\BibitemShut {NoStop}%
\bibitem [{\citenamefont {Wehner}\ \emph {et~al.}(2018)\citenamefont {Wehner},
  \citenamefont {Elkouss},\ and\ \citenamefont {Hanson}}]{wehner2018quantum}%
  \BibitemOpen
  \bibfield  {author} {\bibinfo {author} {\bibfnamefont {S.}~\bibnamefont
  {Wehner}}, \bibinfo {author} {\bibfnamefont {D.}~\bibnamefont {Elkouss}}, \
  and\ \bibinfo {author} {\bibfnamefont {R.}~\bibnamefont {Hanson}},\ }\href
  {\doibase 10.1126/science.aam9288} {\bibfield  {journal} {\bibinfo  {journal}
  {Science}\ }\textbf {\bibinfo {volume} {362}},\  (\bibinfo {year}
  {2018})}\BibitemShut {NoStop}%
\bibitem [{\citenamefont {Bell}(1964)}]{Bell}%
  \BibitemOpen
  \bibfield  {author} {\bibinfo {author} {\bibfnamefont {J.~S.}\ \bibnamefont
  {Bell}},\ }\href {\doibase 10.1103/PhysicsPhysiqueFizika.1.195} {\bibfield
  {journal} {\bibinfo  {journal} {Physics}\ }\textbf {\bibinfo {volume} {1}},\
  \bibinfo {pages} {195} (\bibinfo {year} {1964})}\BibitemShut {NoStop}%
\bibitem [{\citenamefont {Tavakoli}\ \emph {et~al.}(2021)\citenamefont
  {Tavakoli}, \citenamefont {Pozas-Kerstjens}, \citenamefont {Luo},\ and\
  \citenamefont {Renou}}]{tavakoli2021bell}%
  \BibitemOpen
  \bibfield  {author} {\bibinfo {author} {\bibfnamefont {A.}~\bibnamefont
  {Tavakoli}}, \bibinfo {author} {\bibfnamefont {A.}~\bibnamefont
  {Pozas-Kerstjens}}, \bibinfo {author} {\bibfnamefont {M.-X.}\ \bibnamefont
  {Luo}}, \ and\ \bibinfo {author} {\bibfnamefont {M.-O.}\ \bibnamefont
  {Renou}},\ }\href@noop {} {\enquote {\bibinfo {title} {Bell nonlocality in
  networks},}\ } (\bibinfo {year} {2021}),\ \Eprint
  {http://arxiv.org/abs/2104.10700} {arXiv:2104.10700} \BibitemShut {NoStop}%
\bibitem [{\citenamefont {Branciard}\ \emph {et~al.}(2010)\citenamefont
  {Branciard}, \citenamefont {Gisin},\ and\ \citenamefont
  {Pironio}}]{branciard2010}%
  \BibitemOpen
  \bibfield  {author} {\bibinfo {author} {\bibfnamefont {C.}~\bibnamefont
  {Branciard}}, \bibinfo {author} {\bibfnamefont {N.}~\bibnamefont {Gisin}}, \
  and\ \bibinfo {author} {\bibfnamefont {S.}~\bibnamefont {Pironio}},\ }\href
  {\doibase 10.1103/PhysRevLett.104.170401} {\bibfield  {journal} {\bibinfo
  {journal} {Phys. Rev. Lett.}\ }\textbf {\bibinfo {volume} {104}},\ \bibinfo
  {pages} {170401} (\bibinfo {year} {2010})}\BibitemShut {NoStop}%
\bibitem [{\citenamefont {Branciard}\ \emph {et~al.}(2012)\citenamefont
  {Branciard}, \citenamefont {Rosset}, \citenamefont {Gisin},\ and\
  \citenamefont {Pironio}}]{branciard2012bilocal}%
  \BibitemOpen
  \bibfield  {author} {\bibinfo {author} {\bibfnamefont {C.}~\bibnamefont
  {Branciard}}, \bibinfo {author} {\bibfnamefont {D.}~\bibnamefont {Rosset}},
  \bibinfo {author} {\bibfnamefont {N.}~\bibnamefont {Gisin}}, \ and\ \bibinfo
  {author} {\bibfnamefont {S.}~\bibnamefont {Pironio}},\ }\href {\doibase
  10.1103/PhysRevA.85.032119} {\bibfield  {journal} {\bibinfo  {journal} {Phys.
  Rev. A}\ }\textbf {\bibinfo {volume} {85}},\ \bibinfo {pages} {032119}
  (\bibinfo {year} {2012})}\BibitemShut {NoStop}%
\bibitem [{\citenamefont {Fritz}(2012)}]{fritz2012beyond}%
  \BibitemOpen
  \bibfield  {author} {\bibinfo {author} {\bibfnamefont {T.}~\bibnamefont
  {Fritz}},\ }\href {\doibase 10.1088/1367-2630/14/10/103001} {\bibfield
  {journal} {\bibinfo  {journal} {New J. Phys.}\ }\textbf {\bibinfo {volume}
  {14}},\ \bibinfo {pages} {103001} (\bibinfo {year} {2012})}\BibitemShut
  {NoStop}%
\bibitem [{\citenamefont {Chaves}\ and\ \citenamefont
  {Fritz}(2012)}]{Chaves2012}%
  \BibitemOpen
  \bibfield  {author} {\bibinfo {author} {\bibfnamefont {R.}~\bibnamefont
  {Chaves}}\ and\ \bibinfo {author} {\bibfnamefont {T.}~\bibnamefont {Fritz}},\
  }\href {\doibase 10.1103/physreva.85.032113} {\bibfield  {journal} {\bibinfo
  {journal} {Phys. Rev. A}\ }\textbf {\bibinfo {volume} {85}},\  (\bibinfo
  {year} {2012})}\BibitemShut {NoStop}%
\bibitem [{\citenamefont {Tavakoli}\ \emph {et~al.}(2014)\citenamefont
  {Tavakoli}, \citenamefont {Skrzypczyk}, \citenamefont {Cavalcanti},\ and\
  \citenamefont {Ac\'{\i}n}}]{tavakoli2014nonlocal}%
  \BibitemOpen
  \bibfield  {author} {\bibinfo {author} {\bibfnamefont {A.}~\bibnamefont
  {Tavakoli}}, \bibinfo {author} {\bibfnamefont {P.}~\bibnamefont
  {Skrzypczyk}}, \bibinfo {author} {\bibfnamefont {D.}~\bibnamefont
  {Cavalcanti}}, \ and\ \bibinfo {author} {\bibfnamefont {A.}~\bibnamefont
  {Ac\'{\i}n}},\ }\href {\doibase 10.1103/PhysRevA.90.062109} {\bibfield
  {journal} {\bibinfo  {journal} {Phys. Rev. A}\ }\textbf {\bibinfo {volume}
  {90}},\ \bibinfo {pages} {062109} (\bibinfo {year} {2014})}\BibitemShut
  {NoStop}%
\bibitem [{\citenamefont {Chaves}(2016)}]{chaves2016}%
  \BibitemOpen
  \bibfield  {author} {\bibinfo {author} {\bibfnamefont {R.}~\bibnamefont
  {Chaves}},\ }\href {\doibase 10.1103/PhysRevLett.116.010402} {\bibfield
  {journal} {\bibinfo  {journal} {Phys. Rev. Lett.}\ }\textbf {\bibinfo
  {volume} {116}},\ \bibinfo {pages} {010402} (\bibinfo {year}
  {2016})}\BibitemShut {NoStop}%
\bibitem [{\citenamefont {Rosset}\ \emph {et~al.}(2016)\citenamefont {Rosset},
  \citenamefont {Branciard}, \citenamefont {Barnea}, \citenamefont {P\"utz},
  \citenamefont {Brunner},\ and\ \citenamefont {Gisin}}]{Rosset}%
  \BibitemOpen
  \bibfield  {author} {\bibinfo {author} {\bibfnamefont {D.}~\bibnamefont
  {Rosset}}, \bibinfo {author} {\bibfnamefont {C.}~\bibnamefont {Branciard}},
  \bibinfo {author} {\bibfnamefont {T.~J.}\ \bibnamefont {Barnea}}, \bibinfo
  {author} {\bibfnamefont {G.}~\bibnamefont {P\"utz}}, \bibinfo {author}
  {\bibfnamefont {N.}~\bibnamefont {Brunner}}, \ and\ \bibinfo {author}
  {\bibfnamefont {N.}~\bibnamefont {Gisin}},\ }\href {\doibase
  10.1103/PhysRevLett.116.010403} {\bibfield  {journal} {\bibinfo  {journal}
  {Phys. Rev. Lett.}\ }\textbf {\bibinfo {volume} {116}},\ \bibinfo {pages}
  {010403} (\bibinfo {year} {2016})}\BibitemShut {NoStop}%
\bibitem [{\citenamefont {Weilenmann}\ and\ \citenamefont
  {Colbeck}(2018)}]{Weilenmann2018}%
  \BibitemOpen
  \bibfield  {author} {\bibinfo {author} {\bibfnamefont {M.}~\bibnamefont
  {Weilenmann}}\ and\ \bibinfo {author} {\bibfnamefont {R.}~\bibnamefont
  {Colbeck}},\ }\href {\doibase 10.22331/q-2018-03-14-57} {\bibfield  {journal}
  {\bibinfo  {journal} {Quantum}\ }\textbf {\bibinfo {volume} {2}},\ \bibinfo
  {pages} {57} (\bibinfo {year} {2018})}\BibitemShut {NoStop}%
\bibitem [{\citenamefont {Wolfe}\ \emph {et~al.}(2019)\citenamefont {Wolfe},
  \citenamefont {Spekkens},\ and\ \citenamefont {Fritz}}]{Wolfe2019}%
  \BibitemOpen
  \bibfield  {author} {\bibinfo {author} {\bibfnamefont {E.}~\bibnamefont
  {Wolfe}}, \bibinfo {author} {\bibfnamefont {R.~W.}\ \bibnamefont {Spekkens}},
  \ and\ \bibinfo {author} {\bibfnamefont {T.}~\bibnamefont {Fritz}},\ }\href
  {\doibase 10.1515/jci-2017-0020} {\bibfield  {journal} {\bibinfo  {journal}
  {J. Causal Inference}\ }\textbf {\bibinfo {volume} {7}},\  (\bibinfo {year}
  {2019})}\BibitemShut {NoStop}%
\bibitem [{\citenamefont {Gisin}\ \emph {et~al.}(2020)\citenamefont {Gisin},
  \citenamefont {Bancal}, \citenamefont {Cai}, \citenamefont {Remy},
  \citenamefont {Tavakoli}, \citenamefont {Cruzeiro}, \citenamefont {Popescu},\
  and\ \citenamefont {Brunner}}]{gisin2020constraints}%
  \BibitemOpen
  \bibfield  {author} {\bibinfo {author} {\bibfnamefont {N.}~\bibnamefont
  {Gisin}}, \bibinfo {author} {\bibfnamefont {J.-D.}\ \bibnamefont {Bancal}},
  \bibinfo {author} {\bibfnamefont {Y.}~\bibnamefont {Cai}}, \bibinfo {author}
  {\bibfnamefont {P.}~\bibnamefont {Remy}}, \bibinfo {author} {\bibfnamefont
  {A.}~\bibnamefont {Tavakoli}}, \bibinfo {author} {\bibfnamefont {E.~Z.}\
  \bibnamefont {Cruzeiro}}, \bibinfo {author} {\bibfnamefont {S.}~\bibnamefont
  {Popescu}}, \ and\ \bibinfo {author} {\bibfnamefont {N.}~\bibnamefont
  {Brunner}},\ }\href {\doibase 10.1038/s41467-020-16137-4} {\bibfield
  {journal} {\bibinfo  {journal} {Nature communications}\ }\textbf {\bibinfo
  {volume} {11}},\ \bibinfo {pages} {1} (\bibinfo {year} {2020})}\BibitemShut
  {NoStop}%
\bibitem [{\citenamefont {Åberg}\ \emph {et~al.}(2020)\citenamefont {Åberg},
  \citenamefont {Nery}, \citenamefont {Duarte},\ and\ \citenamefont
  {Chaves}}]{Aberg2020}%
  \BibitemOpen
  \bibfield  {author} {\bibinfo {author} {\bibfnamefont {J.}~\bibnamefont
  {Åberg}}, \bibinfo {author} {\bibfnamefont {R.}~\bibnamefont {Nery}},
  \bibinfo {author} {\bibfnamefont {C.}~\bibnamefont {Duarte}}, \ and\ \bibinfo
  {author} {\bibfnamefont {R.}~\bibnamefont {Chaves}},\ }\href {\doibase
  10.1103/physrevlett.125.110505} {\bibfield  {journal} {\bibinfo  {journal}
  {Phys. Rev. Lett.}\ }\textbf {\bibinfo {volume} {125}},\  (\bibinfo {year}
  {2020})}\BibitemShut {NoStop}%
\bibitem [{\citenamefont {Wolfe}\ \emph {et~al.}(2021)\citenamefont {Wolfe},
  \citenamefont {Pozas-Kerstjens}, \citenamefont {Grinberg}, \citenamefont
  {Rosset}, \citenamefont {Ac\'{\i}n},\ and\ \citenamefont
  {Navascu\'es}}]{Wolfe2021}%
  \BibitemOpen
  \bibfield  {author} {\bibinfo {author} {\bibfnamefont {E.}~\bibnamefont
  {Wolfe}}, \bibinfo {author} {\bibfnamefont {A.}~\bibnamefont
  {Pozas-Kerstjens}}, \bibinfo {author} {\bibfnamefont {M.}~\bibnamefont
  {Grinberg}}, \bibinfo {author} {\bibfnamefont {D.}~\bibnamefont {Rosset}},
  \bibinfo {author} {\bibfnamefont {A.}~\bibnamefont {Ac\'{\i}n}}, \ and\
  \bibinfo {author} {\bibfnamefont {M.}~\bibnamefont {Navascu\'es}},\ }\href
  {\doibase 10.1103/PhysRevX.11.021043} {\bibfield  {journal} {\bibinfo
  {journal} {Phys. Rev. X}\ }\textbf {\bibinfo {volume} {11}},\ \bibinfo
  {pages} {021043} (\bibinfo {year} {2021})}\BibitemShut {NoStop}%
\bibitem [{\citenamefont {Kriv{\'a}chy}\ \emph {et~al.}(2020)\citenamefont
  {Kriv{\'a}chy}, \citenamefont {Cai}, \citenamefont {Cavalcanti},
  \citenamefont {Tavakoli}, \citenamefont {Gisin},\ and\ \citenamefont
  {Brunner}}]{krivachy2020neural}%
  \BibitemOpen
  \bibfield  {author} {\bibinfo {author} {\bibfnamefont {T.}~\bibnamefont
  {Kriv{\'a}chy}}, \bibinfo {author} {\bibfnamefont {Y.}~\bibnamefont {Cai}},
  \bibinfo {author} {\bibfnamefont {D.}~\bibnamefont {Cavalcanti}}, \bibinfo
  {author} {\bibfnamefont {A.}~\bibnamefont {Tavakoli}}, \bibinfo {author}
  {\bibfnamefont {N.}~\bibnamefont {Gisin}}, \ and\ \bibinfo {author}
  {\bibfnamefont {N.}~\bibnamefont {Brunner}},\ }\href {\doibase
  10.1038/s41534-020-00305-x} {\bibfield  {journal} {\bibinfo  {journal} {npj
  Quantum Information}\ }\textbf {\bibinfo {volume} {6}},\ \bibinfo {pages} {1}
  (\bibinfo {year} {2020})}\BibitemShut {NoStop}%
\bibitem [{\citenamefont {Fraser}\ and\ \citenamefont
  {Wolfe}(2018)}]{fraser2018causal}%
  \BibitemOpen
  \bibfield  {author} {\bibinfo {author} {\bibfnamefont {T.~C.}\ \bibnamefont
  {Fraser}}\ and\ \bibinfo {author} {\bibfnamefont {E.}~\bibnamefont {Wolfe}},\
  }\href {\doibase 10.1103/PhysRevA.98.022113} {\bibfield  {journal} {\bibinfo
  {journal} {Phys. Rev. A}\ }\textbf {\bibinfo {volume} {98}},\ \bibinfo
  {pages} {022113} (\bibinfo {year} {2018})}\BibitemShut {NoStop}%
\bibitem [{\citenamefont {Renou}\ \emph {et~al.}(2019)\citenamefont {Renou},
  \citenamefont {B\"aumer}, \citenamefont {Boreiri}, \citenamefont {Brunner},
  \citenamefont {Gisin},\ and\ \citenamefont {Beigi}}]{renou2019genuine}%
  \BibitemOpen
  \bibfield  {author} {\bibinfo {author} {\bibfnamefont {M.-O.}\ \bibnamefont
  {Renou}}, \bibinfo {author} {\bibfnamefont {E.}~\bibnamefont {B\"aumer}},
  \bibinfo {author} {\bibfnamefont {S.}~\bibnamefont {Boreiri}}, \bibinfo
  {author} {\bibfnamefont {N.}~\bibnamefont {Brunner}}, \bibinfo {author}
  {\bibfnamefont {N.}~\bibnamefont {Gisin}}, \ and\ \bibinfo {author}
  {\bibfnamefont {S.}~\bibnamefont {Beigi}},\ }\href {\doibase
  10.1103/PhysRevLett.123.140401} {\bibfield  {journal} {\bibinfo  {journal}
  {Phys. Rev. Lett.}\ }\textbf {\bibinfo {volume} {123}},\ \bibinfo {pages}
  {140401} (\bibinfo {year} {2019})}\BibitemShut {NoStop}%
\bibitem [{\citenamefont {Renou}\ and\ \citenamefont
  {Beigi}(2020)}]{renou2020}%
  \BibitemOpen
  \bibfield  {author} {\bibinfo {author} {\bibfnamefont {M.-O.}\ \bibnamefont
  {Renou}}\ and\ \bibinfo {author} {\bibfnamefont {S.}~\bibnamefont {Beigi}},\
  }\href@noop {} {\enquote {\bibinfo {title} {Network nonlocality via rigidity
  of token-counting and color-matching},}\ } (\bibinfo {year} {2020}),\ \Eprint
  {http://arxiv.org/abs/2011.02769} {arXiv:2011.02769 [quant-ph]} \BibitemShut
  {NoStop}%
\bibitem [{\citenamefont {Šupić}\ \emph {et~al.}(2021)\citenamefont
  {Šupić}, \citenamefont {Bancal}, \citenamefont {Cai},\ and\ \citenamefont
  {Brunner}}]{Supic}%
  \BibitemOpen
  \bibfield  {author} {\bibinfo {author} {\bibfnamefont {I.}~\bibnamefont
  {Šupić}}, \bibinfo {author} {\bibfnamefont {J.-D.}\ \bibnamefont {Bancal}},
  \bibinfo {author} {\bibfnamefont {Y.}~\bibnamefont {Cai}}, \ and\ \bibinfo
  {author} {\bibfnamefont {N.}~\bibnamefont {Brunner}},\ }\href@noop {}
  {\enquote {\bibinfo {title} {Genuine network quantum nonlocality and
  self-testing},}\ } (\bibinfo {year} {2021}),\ \Eprint
  {http://arxiv.org/abs/2105.12341} {arXiv:2105.12341 [quant-ph]} \BibitemShut
  {NoStop}%
\bibitem [{\citenamefont {Kraft}\ \emph {et~al.}(2020)\citenamefont {Kraft},
  \citenamefont {Designolle}, \citenamefont {Ritz}, \citenamefont {Brunner},
  \citenamefont {G{\"u}hne},\ and\ \citenamefont {Huber}}]{kraft2020quantum}%
  \BibitemOpen
  \bibfield  {author} {\bibinfo {author} {\bibfnamefont {T.}~\bibnamefont
  {Kraft}}, \bibinfo {author} {\bibfnamefont {S.}~\bibnamefont {Designolle}},
  \bibinfo {author} {\bibfnamefont {C.}~\bibnamefont {Ritz}}, \bibinfo {author}
  {\bibfnamefont {N.}~\bibnamefont {Brunner}}, \bibinfo {author} {\bibfnamefont
  {O.}~\bibnamefont {G{\"u}hne}}, \ and\ \bibinfo {author} {\bibfnamefont
  {M.}~\bibnamefont {Huber}},\ }\href@noop {} {\enquote {\bibinfo {title}
  {Quantum entanglement in the triangle network},}\ } (\bibinfo {year}
  {2020}),\ \Eprint {http://arxiv.org/abs/2002.03970} {arXiv:2002.03970
  [quant-ph]} \BibitemShut {NoStop}%
\bibitem [{\citenamefont {Navascu\'es}\ \emph {et~al.}(2020)\citenamefont
  {Navascu\'es}, \citenamefont {Wolfe}, \citenamefont {Rosset},\ and\
  \citenamefont {Pozas-Kerstjens}}]{Navascues}%
  \BibitemOpen
  \bibfield  {author} {\bibinfo {author} {\bibfnamefont {M.}~\bibnamefont
  {Navascu\'es}}, \bibinfo {author} {\bibfnamefont {E.}~\bibnamefont {Wolfe}},
  \bibinfo {author} {\bibfnamefont {D.}~\bibnamefont {Rosset}}, \ and\ \bibinfo
  {author} {\bibfnamefont {A.}~\bibnamefont {Pozas-Kerstjens}},\ }\href
  {\doibase 10.1103/PhysRevLett.125.240505} {\bibfield  {journal} {\bibinfo
  {journal} {Phys. Rev. Lett.}\ }\textbf {\bibinfo {volume} {125}},\ \bibinfo
  {pages} {240505} (\bibinfo {year} {2020})}\BibitemShut {NoStop}%
\bibitem [{\citenamefont {{Luo}}(2020)}]{Luo2020}%
  \BibitemOpen
  \bibfield  {author} {\bibinfo {author} {\bibfnamefont {M.-X.}\ \bibnamefont
  {{Luo}}},\ }\href@noop {} {\bibfield  {journal} {\bibinfo  {journal} {arXiv
  e-prints}\ ,\ \bibinfo {eid} {arXiv:2003.07153}} (\bibinfo {year} {2020})},\
  \Eprint {http://arxiv.org/abs/2003.07153} {arXiv:2003.07153 [quant-ph]}
  \BibitemShut {NoStop}%
\bibitem [{\citenamefont {Kraft}\ \emph {et~al.}(2021)\citenamefont {Kraft},
  \citenamefont {Spee}, \citenamefont {Yu},\ and\ \citenamefont
  {Gühne}}]{Spee}%
  \BibitemOpen
  \bibfield  {author} {\bibinfo {author} {\bibfnamefont {T.}~\bibnamefont
  {Kraft}}, \bibinfo {author} {\bibfnamefont {C.}~\bibnamefont {Spee}},
  \bibinfo {author} {\bibfnamefont {X.-D.}\ \bibnamefont {Yu}}, \ and\ \bibinfo
  {author} {\bibfnamefont {O.}~\bibnamefont {Gühne}},\ }\href {\doibase
  10.1103/physreva.103.052405} {\bibfield  {journal} {\bibinfo  {journal}
  {Phys. Rev. A}\ }\textbf {\bibinfo {volume} {103}} (\bibinfo {year} {2021}),\
  10.1103/physreva.103.052405}\BibitemShut {NoStop}%
\bibitem [{\citenamefont {Wiseman}\ \emph {et~al.}(2007)\citenamefont
  {Wiseman}, \citenamefont {Jones},\ and\ \citenamefont
  {Doherty}}]{Wiseman2007}%
  \BibitemOpen
  \bibfield  {author} {\bibinfo {author} {\bibfnamefont {H.~M.}\ \bibnamefont
  {Wiseman}}, \bibinfo {author} {\bibfnamefont {S.~J.}\ \bibnamefont {Jones}},
  \ and\ \bibinfo {author} {\bibfnamefont {A.~C.}\ \bibnamefont {Doherty}},\
  }\href {\doibase 10.1103/PhysRevLett.98.140402} {\bibfield  {journal}
  {\bibinfo  {journal} {Phys. Rev. Lett.}\ }\textbf {\bibinfo {volume} {98}},\
  \bibinfo {pages} {140402} (\bibinfo {year} {2007})}\BibitemShut {NoStop}%
\bibitem [{\citenamefont {Cavalcanti}\ \emph {et~al.}(2015)\citenamefont
  {Cavalcanti}, \citenamefont {Skrzypczyk}, \citenamefont {Aguilar},
  \citenamefont {Nery}, \citenamefont {Ribeiro},\ and\ \citenamefont
  {Walborn}}]{cavalcanti2015detection}%
  \BibitemOpen
  \bibfield  {author} {\bibinfo {author} {\bibfnamefont {D.}~\bibnamefont
  {Cavalcanti}}, \bibinfo {author} {\bibfnamefont {P.}~\bibnamefont
  {Skrzypczyk}}, \bibinfo {author} {\bibfnamefont {G.}~\bibnamefont {Aguilar}},
  \bibinfo {author} {\bibfnamefont {R.}~\bibnamefont {Nery}}, \bibinfo {author}
  {\bibfnamefont {P.~S.}\ \bibnamefont {Ribeiro}}, \ and\ \bibinfo {author}
  {\bibfnamefont {S.}~\bibnamefont {Walborn}},\ }\href {\doibase
  10.1038/ncomms8941} {\bibfield  {journal} {\bibinfo  {journal} {Nature
  communications}\ }\textbf {\bibinfo {volume} {6}},\ \bibinfo {pages} {1}
  (\bibinfo {year} {2015})}\BibitemShut {NoStop}%
\bibitem [{\citenamefont {He}\ and\ \citenamefont
  {Reid}(2013)}]{he2013genuine}%
  \BibitemOpen
  \bibfield  {author} {\bibinfo {author} {\bibfnamefont {Q.~Y.}\ \bibnamefont
  {He}}\ and\ \bibinfo {author} {\bibfnamefont {M.~D.}\ \bibnamefont {Reid}},\
  }\href {\doibase 10.1103/PhysRevLett.111.250403} {\bibfield  {journal}
  {\bibinfo  {journal} {Phys. Rev. Lett.}\ }\textbf {\bibinfo {volume} {111}},\
  \bibinfo {pages} {250403} (\bibinfo {year} {2013})}\BibitemShut {NoStop}%
\bibitem [{\citenamefont {Pusey}(2013)}]{pusey2013negativity}%
  \BibitemOpen
  \bibfield  {author} {\bibinfo {author} {\bibfnamefont {M.~F.}\ \bibnamefont
  {Pusey}},\ }\href {\doibase 10.1103/PhysRevA.88.032313} {\bibfield  {journal}
  {\bibinfo  {journal} {Phys. Rev. A}\ }\textbf {\bibinfo {volume} {88}},\
  \bibinfo {pages} {032313} (\bibinfo {year} {2013})}\BibitemShut {NoStop}%
\bibitem [{Note1()}]{Note1}%
  \BibitemOpen
  \bibinfo {note} {Note that steering can be asymmetrical; some states are
  steerable from Alice to Bob, but not the other way around \cite
  {Bowles}.}\BibitemShut {Stop}%
\bibitem [{\citenamefont {\ifmmode~\dot{Z}\else \.{Z}\fi{}ukowski}\ \emph
  {et~al.}(1993)\citenamefont {\ifmmode~\dot{Z}\else \.{Z}\fi{}ukowski},
  \citenamefont {Zeilinger}, \citenamefont {Horne},\ and\ \citenamefont
  {Ekert}}]{Ekert}%
  \BibitemOpen
  \bibfield  {author} {\bibinfo {author} {\bibfnamefont {M.}~\bibnamefont
  {\ifmmode~\dot{Z}\else \.{Z}\fi{}ukowski}}, \bibinfo {author} {\bibfnamefont
  {A.}~\bibnamefont {Zeilinger}}, \bibinfo {author} {\bibfnamefont {M.~A.}\
  \bibnamefont {Horne}}, \ and\ \bibinfo {author} {\bibfnamefont {A.~K.}\
  \bibnamefont {Ekert}},\ }\href {\doibase 10.1103/PhysRevLett.71.4287}
  {\bibfield  {journal} {\bibinfo  {journal} {Phys. Rev. Lett.}\ }\textbf
  {\bibinfo {volume} {71}},\ \bibinfo {pages} {4287} (\bibinfo {year}
  {1993})}\BibitemShut {NoStop}%
\bibitem [{sup()}]{suppmat}%
  \BibitemOpen
  \href@noop {} {}\bibinfo {note} {{See Supplemental Material at [URL will be
  inserted by publisher] for further observations concerning basic network
  structures, a discussion of NLHS models in general linear networks, and a
  detailed calculation of entanglement swapping between Doubly-Erased Werner
  states.}}\BibitemShut {Stop}%
\bibitem [{\citenamefont {Quintino}\ \emph {et~al.}(2016)\citenamefont
  {Quintino}, \citenamefont {Brunner},\ and\ \citenamefont
  {Huber}}]{quintino2016superactivation}%
  \BibitemOpen
  \bibfield  {author} {\bibinfo {author} {\bibfnamefont {M.~T.}\ \bibnamefont
  {Quintino}}, \bibinfo {author} {\bibfnamefont {N.}~\bibnamefont {Brunner}}, \
  and\ \bibinfo {author} {\bibfnamefont {M.}~\bibnamefont {Huber}},\ }\href
  {\doibase 10.1103/PhysRevA.94.062123} {\bibfield  {journal} {\bibinfo
  {journal} {Phys. Rev. A}\ }\textbf {\bibinfo {volume} {94}},\ \bibinfo
  {pages} {062123} (\bibinfo {year} {2016})}\BibitemShut {NoStop}%
\bibitem [{\citenamefont {Tischler}\ \emph {et~al.}(2018)\citenamefont
  {Tischler}, \citenamefont {Ghafari}, \citenamefont {Baker}, \citenamefont
  {Slussarenko}, \citenamefont {Patel}, \citenamefont {Weston}, \citenamefont
  {Wollmann}, \citenamefont {Shalm}, \citenamefont {Verma}, \citenamefont
  {Nam}, \citenamefont {Nguyen}, \citenamefont {Wiseman},\ and\ \citenamefont
  {Pryde}}]{tischler2018conclusive}%
  \BibitemOpen
  \bibfield  {author} {\bibinfo {author} {\bibfnamefont {N.}~\bibnamefont
  {Tischler}}, \bibinfo {author} {\bibfnamefont {F.}~\bibnamefont {Ghafari}},
  \bibinfo {author} {\bibfnamefont {T.~J.}\ \bibnamefont {Baker}}, \bibinfo
  {author} {\bibfnamefont {S.}~\bibnamefont {Slussarenko}}, \bibinfo {author}
  {\bibfnamefont {R.~B.}\ \bibnamefont {Patel}}, \bibinfo {author}
  {\bibfnamefont {M.~M.}\ \bibnamefont {Weston}}, \bibinfo {author}
  {\bibfnamefont {S.}~\bibnamefont {Wollmann}}, \bibinfo {author}
  {\bibfnamefont {L.~K.}\ \bibnamefont {Shalm}}, \bibinfo {author}
  {\bibfnamefont {V.~B.}\ \bibnamefont {Verma}}, \bibinfo {author}
  {\bibfnamefont {S.~W.}\ \bibnamefont {Nam}}, \bibinfo {author} {\bibfnamefont
  {H.~C.}\ \bibnamefont {Nguyen}}, \bibinfo {author} {\bibfnamefont {H.~M.}\
  \bibnamefont {Wiseman}}, \ and\ \bibinfo {author} {\bibfnamefont {G.~J.}\
  \bibnamefont {Pryde}},\ }\href {\doibase 10.1103/PhysRevLett.121.100401}
  {\bibfield  {journal} {\bibinfo  {journal} {Phys. Rev. Lett.}\ }\textbf
  {\bibinfo {volume} {121}},\ \bibinfo {pages} {100401} (\bibinfo {year}
  {2018})}\BibitemShut {NoStop}%
\bibitem [{\citenamefont {Quintino}\ \emph {et~al.}(2015)\citenamefont
  {Quintino}, \citenamefont {V\'ertesi}, \citenamefont {Cavalcanti},
  \citenamefont {Augusiak}, \citenamefont {Demianowicz}, \citenamefont
  {Ac\'{\i}n},\ and\ \citenamefont {Brunner}}]{quintino2015inequivalence}%
  \BibitemOpen
  \bibfield  {author} {\bibinfo {author} {\bibfnamefont {M.~T.}\ \bibnamefont
  {Quintino}}, \bibinfo {author} {\bibfnamefont {T.}~\bibnamefont {V\'ertesi}},
  \bibinfo {author} {\bibfnamefont {D.}~\bibnamefont {Cavalcanti}}, \bibinfo
  {author} {\bibfnamefont {R.}~\bibnamefont {Augusiak}}, \bibinfo {author}
  {\bibfnamefont {M.}~\bibnamefont {Demianowicz}}, \bibinfo {author}
  {\bibfnamefont {A.}~\bibnamefont {Ac\'{\i}n}}, \ and\ \bibinfo {author}
  {\bibfnamefont {N.}~\bibnamefont {Brunner}},\ }\href {\doibase
  10.1103/PhysRevA.92.032107} {\bibfield  {journal} {\bibinfo  {journal} {Phys.
  Rev. A}\ }\textbf {\bibinfo {volume} {92}},\ \bibinfo {pages} {032107}
  (\bibinfo {year} {2015})}\BibitemShut {NoStop}%
\bibitem [{\citenamefont {Sainz}\ \emph {et~al.}(2015)\citenamefont {Sainz},
  \citenamefont {Brunner}, \citenamefont {Cavalcanti}, \citenamefont
  {Skrzypczyk},\ and\ \citenamefont {V\'ertesi}}]{sainz2015postquantum}%
  \BibitemOpen
  \bibfield  {author} {\bibinfo {author} {\bibfnamefont {A.~B.}\ \bibnamefont
  {Sainz}}, \bibinfo {author} {\bibfnamefont {N.}~\bibnamefont {Brunner}},
  \bibinfo {author} {\bibfnamefont {D.}~\bibnamefont {Cavalcanti}}, \bibinfo
  {author} {\bibfnamefont {P.}~\bibnamefont {Skrzypczyk}}, \ and\ \bibinfo
  {author} {\bibfnamefont {T.}~\bibnamefont {V\'ertesi}},\ }\href {\doibase
  10.1103/PhysRevLett.115.190403} {\bibfield  {journal} {\bibinfo  {journal}
  {Phys. Rev. Lett.}\ }\textbf {\bibinfo {volume} {115}},\ \bibinfo {pages}
  {190403} (\bibinfo {year} {2015})}\BibitemShut {NoStop}%
\bibitem [{\citenamefont {Kogias}\ \emph {et~al.}(2015)\citenamefont {Kogias},
  \citenamefont {Skrzypczyk}, \citenamefont {Cavalcanti}, \citenamefont
  {Ac\'{\i}n},\ and\ \citenamefont {Adesso}}]{kogias2015hierarchy}%
  \BibitemOpen
  \bibfield  {author} {\bibinfo {author} {\bibfnamefont {I.}~\bibnamefont
  {Kogias}}, \bibinfo {author} {\bibfnamefont {P.}~\bibnamefont {Skrzypczyk}},
  \bibinfo {author} {\bibfnamefont {D.}~\bibnamefont {Cavalcanti}}, \bibinfo
  {author} {\bibfnamefont {A.}~\bibnamefont {Ac\'{\i}n}}, \ and\ \bibinfo
  {author} {\bibfnamefont {G.}~\bibnamefont {Adesso}},\ }\href {\doibase
  10.1103/PhysRevLett.115.210401} {\bibfield  {journal} {\bibinfo  {journal}
  {Phys. Rev. Lett.}\ }\textbf {\bibinfo {volume} {115}},\ \bibinfo {pages}
  {210401} (\bibinfo {year} {2015})}\BibitemShut {NoStop}%
\bibitem [{\citenamefont {Moroder}\ \emph {et~al.}(2016)\citenamefont
  {Moroder}, \citenamefont {Gittsovich}, \citenamefont {Huber}, \citenamefont
  {Uola},\ and\ \citenamefont {G\"uhne}}]{moroder2016maps}%
  \BibitemOpen
  \bibfield  {author} {\bibinfo {author} {\bibfnamefont {T.}~\bibnamefont
  {Moroder}}, \bibinfo {author} {\bibfnamefont {O.}~\bibnamefont {Gittsovich}},
  \bibinfo {author} {\bibfnamefont {M.}~\bibnamefont {Huber}}, \bibinfo
  {author} {\bibfnamefont {R.}~\bibnamefont {Uola}}, \ and\ \bibinfo {author}
  {\bibfnamefont {O.}~\bibnamefont {G\"uhne}},\ }\href {\doibase
  10.1103/PhysRevLett.116.090403} {\bibfield  {journal} {\bibinfo  {journal}
  {Phys. Rev. Lett.}\ }\textbf {\bibinfo {volume} {116}},\ \bibinfo {pages}
  {090403} (\bibinfo {year} {2016})}\BibitemShut {NoStop}%
\bibitem [{\citenamefont {Bowles}\ \emph {et~al.}(2014)\citenamefont {Bowles},
  \citenamefont {V\'ertesi}, \citenamefont {Quintino},\ and\ \citenamefont
  {Brunner}}]{Bowles}%
  \BibitemOpen
  \bibfield  {author} {\bibinfo {author} {\bibfnamefont {J.}~\bibnamefont
  {Bowles}}, \bibinfo {author} {\bibfnamefont {T.}~\bibnamefont {V\'ertesi}},
  \bibinfo {author} {\bibfnamefont {M.~T.}\ \bibnamefont {Quintino}}, \ and\
  \bibinfo {author} {\bibfnamefont {N.}~\bibnamefont {Brunner}},\ }\href
  {\doibase 10.1103/PhysRevLett.112.200402} {\bibfield  {journal} {\bibinfo
  {journal} {Phys. Rev. Lett.}\ }\textbf {\bibinfo {volume} {112}},\ \bibinfo
  {pages} {200402} (\bibinfo {year} {2014})}\BibitemShut {NoStop}%
\end{thebibliography}%


%

\onecolumngrid
\appendix
\captionsetup{justification=centering}

\section{Network Steering and NLHS models on the line}
\label{app:nlhs-on-line}

\subsection{The Simplest Scenario}

In the main text we mainly discuss the scenario with three parties and trusted endpoints. Here we will extend and generalise this, first to four parties on the line, which we can also interpret as the triangle with a single trusted party. We then generalise our discussion to lines (equivalently, rings) of arbitrary length. To fix notation, Greek subscripts will denote random (hidden) variables (such as $\alpha$ in $\sigma_\alpha$) and Roman subscripts (such as the network assemblage $\sigma_b$, or the measurement $M_c$) will denote outcome index labels. Subsystem labels will be denoted by superscripts, for example $\rho^{AB}$ is a quantum state on subsystems $A$ and $B$, and $M_b^{BB'}$ is a measurement on systems $B$ and $B'$ (with outcomes $b$) -- here subsystems $B$ and $B'$ are implicitly assumed to belong to the same party.

Recalling the 3 party simple scenario, such a scenario is described through quantum mechanics as the existence of quantum sources $\rho^{AB}$, $\rho^{B'C}$ and a fixed measurement on the central party $M_b^{BB'}$ such that the resulting set of states can be written as
\begin{equation}
\sigma_{b}^{AC} = \text{Tr}_{BB'}\bigg ( \bigg [\mathbbm{1}^A \otimes M^{BB'}_b \otimes \mathbbm{1}^C \bigg ] \rho^{AB} \otimes \rho^{B'C} \bigg ). \end{equation}
As in the main text, our definition of a NLHS model here is the existence of probability distributions $p(\alpha)$, $p(\gamma)$ and $p(b|\alpha, \gamma)$, and normalised states $\sigma^A_\alpha$, $\sigma^C_\gamma$ such that

\begin{equation}
\sigma^{AC}_{b} =\sum_{\alpha,\gamma} ~ p(\alpha)p(\gamma) ~ p(b|\alpha,\gamma) ~ \sigma^{A}_\alpha\otimes\sigma^{C}_\gamma. 
\end{equation}

\subsection{The Triangle Scenario}

We can naturally extend this to the line with four parties and trusted endpoints, equivalently viewing this as the triangle network with a single trusted party (Figure \ref{fig:app_triangle_scenario}).
\begin{figure}[H]
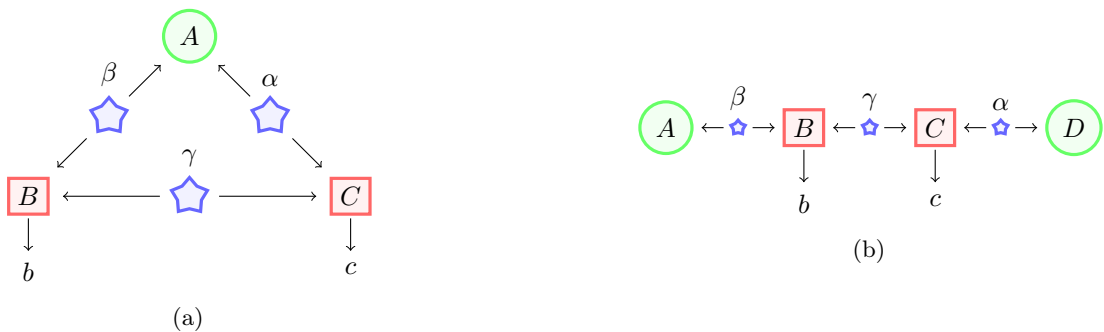

    \centering
    \begin{subfigure}{.5\linewidth}
    \centering
      \includegraphics[]{triangle_scenario}
      \caption{}
    \end{subfigure}%
    \begin{subfigure}{.5\linewidth}
    \centering
      \includegraphics[]{triangle_line_variable_scenario}
      \caption{}
    \end{subfigure}
\caption{Line scenario with four parties, or alternatively the triangle scenario with a single trusted party.}
\label{fig:app_triangle_scenario}
\end{figure}

 Here the quantum description would be
\begin{equation}
\sigma_{b,c}^{AD} = \text{Tr}_{BB'CC'}\bigg ( \bigg [\mathbbm{1}^A \otimes M^{BB'}_b \otimes M^{CC'}_c \otimes \mathbbm{1}^D \bigg ]\rho^{AB} \otimes \rho^{B'C} \otimes \rho^{C'D} \bigg ), \label{eq:app_trianglequantum}
\end{equation}

and the network assemblage $\sigma_{b,c}^{AD}$ would admit an NLHS description if it could be written in the form
\begin{equation}
    \sigma_{b,c}^{AD}=\sum_{\alpha,\beta,\gamma}p(\alpha)p(\beta)p(\gamma)p(a|\beta,\gamma)p(b|\alpha,\gamma)\sigma^A_\alpha\otimes\sigma^D_\beta. \label{eq:app_trianglelhs}
\end{equation}

We will now consider how NLHS models can naturally arise by considering properties of the three sources. If the central source is separable, i.e. $\rho^{B'C} = \sum_\gamma p(\gamma) \sigma^{B'}_\gamma \otimes \sigma^C_\gamma$. Inserting this into Equation (\ref{eq:app_trianglequantum}) yields
\begin{align}
\sigma_{b,c}^{AD} &= \text{Tr}_{BB'CC'}\bigg ( \bigg [\mathbbm{1}^A \otimes M^{BB'}_b \otimes M^{CC'}_c \otimes \mathbbm{1}^D \bigg ] \rho^{AB} \otimes \rho^{B'C} \otimes \rho^{C'D} \bigg ) \label{eq:app_triangle-first}\\
&=\sum_\gamma p(\gamma)~\text{Tr}_{BB'CC'}\bigg ( \bigg [\mathbbm{1}^A \otimes M^{BB'}_b \otimes M^{CC'}_c \otimes \mathbbm{1}^D \bigg ] \rho^{AB} \otimes \sigma^{B'}_\gamma \otimes \sigma^C_\gamma \otimes \rho^{C'D} \bigg ) \\
&= \sum_\gamma p(\gamma)~ \text{Tr}_{BB'}\bigg ( \bigg [M_b^{BB'} \otimes \mathbbm{1}^A \bigg ]  \rho^{AB} \otimes \sigma_\gamma^{B'} \bigg )\otimes \text{Tr}_{CC'}\bigg ( \bigg [ M_c^{CC'} \otimes \mathbbm{1}^D \bigg ] \sigma^C_\gamma \otimes \rho^{C'D} \bigg ) \\
&= \sum_\gamma p(\gamma)~ \text{Tr}_{B}\bigg ( \bigg [M_{b|\gamma}^B \otimes \mathbbm{1}^A \bigg ]  \rho^{AB} \bigg )\otimes \text{Tr}_{C'}\bigg ( \bigg [ M_{c|\gamma}^{C'} \otimes \mathbbm{1}^D \bigg ] \rho^{C'D} \bigg ),
\end{align}
where we defined $M_{b|\gamma}^B := \text{Tr}_{B'}\bigg (M_b^{BB'} \mathbbm{1}^B \otimes \sigma^{B'}_\gamma  \bigg )$ and $M_{c|\gamma}^{C'} := \text{Tr}_{C}\bigg (M_c^{CC'}  \sigma^{C}_\gamma \otimes \mathbbm{1}^{C'}   \bigg )$ as valid sets of measurements. Then if $\rho^{AB}$ and $\rho^{C'D}$ are \textit{unsteerable} towards $A$ and $D$ respectively (but possibly entangled: see the main text and \cite{uola2020quantum}), we can extract a \textit{local hidden state} (LHS) model to obtain
\begin{align}
    \sigma_{b,c}^{AD} &= \sum_\gamma p(\gamma)~ \bigg ( \sum_\alpha p(\alpha) p(b|\alpha,\gamma) \sigma_\alpha^A \bigg )\otimes \bigg ( \sum_\beta p(\beta) p(c|\beta,\gamma) \sigma_\beta^D \bigg ) \\
    &= \sum_{\alpha, \beta , \gamma} p(\alpha) p(\beta) p(\gamma)~ p(b|\alpha,\gamma)p(c|\beta,\gamma)~ \sigma_\alpha^A \otimes \sigma_\beta^D,
\end{align}
which has exactly the same form as the NLHS condition in Equation (\ref{eq:app_trianglelhs}). Therefore taking $\rho^{AB}$ as separable and $\rho^{AB}$ and $\rho^{C'D}$ unsteerable towards $A$ and $D$ respectively, we will always arrive at an NLHS model, for any intermediate measurements $M_b^{BB'}$ and $M_c^{CC'}$.

Similarly suppose now that the source $\rho^{AB} = \sum_\alpha p(\alpha) \sigma^A_\alpha \otimes \sigma^B_\alpha$ is separable. Then we find
\begin{align}
\sigma_{b,c} &= \sum_\alpha p(\alpha)~ \sigma_\alpha^A \otimes \text{Tr}_{BB'CC'}\bigg ( \bigg [M_b^{BB'} \otimes M_c^{CC'} \bigg ]  \sigma^{B}_\alpha  \otimes \rho^{B'C} \otimes \rho^{C'D} \bigg ) \label{eq:app_oppsep}
\end{align}%
If  $\rho^{C'D}$ is also separable, and $\rho^{B'C}$ is \textit{local} (in the Bell nonlocality sense, see the main text and \cite{brunner2014bell}), we get
\begin{align}
\sigma_{b,c} &= \sum_{\alpha,\beta} p(\alpha) p(\beta)~ \text{Tr}_{BB'CC'}\bigg (  \bigg [ M_b^{BB'} \otimes M_c^{CC'} \bigg ]  \sigma^{B}_\alpha  \otimes \rho^{B'C} \otimes \sigma^{C'}_\beta \bigg ) \sigma^A_\alpha \otimes \sigma^D_\beta \\
&= \sum_{\alpha,\beta} p(\alpha) p(\beta)~ \text{Tr}_{BB'CC'}\bigg (  \bigg [ M_{b|\alpha}^{B'} \otimes M_{c|\beta}^{C} \bigg ]   \rho^{B'C} \bigg ) \sigma^A_\alpha \otimes \sigma^D_\beta \\
&= \sum_{\alpha, \beta, \gamma} p(\alpha) p(\beta) p(\gamma) ~ p(b|\alpha, \gamma) p(c|\beta, \gamma) \sigma^{A}_\alpha \otimes \sigma^{D}_\beta
\end{align}
where in the final line we extracted a \textit{local hidden variable} (LHV) model using the locality of $\rho^{B'C}$. Hence taking the central source $\rho^{B'C}$ as local, and the adjacent sources as separable will also always lead to an NLHS model, for any measurements.

Still taking $\rho^{AB}$ as separable as in Equation (\ref{eq:app_oppsep}), if instead now $\rho^{B'C}$ and $\rho^{C'D}$ are unsteerable towards $C$ and $D$ respectively, we have 
\begin{align}
\sigma_{b,c} &= \sum_\alpha p(\alpha)~\sigma_\alpha^A \otimes~ \text{Tr}_{CC'} \bigg ( \bigg [ M_c^{CC'} \otimes \mathbbm{1}_D \bigg ]  \text{Tr}_{BB'}\bigg ( \bigg [ M_b^{BB'} \otimes \mathbbm{1}_C  \bigg ]  \sigma^{B}_\alpha \otimes \rho^{B'C}  \bigg )  \otimes \rho^{C'D} \bigg ) \label{eq:app_unsteer1}\\
&= \sum_\alpha p(\alpha)~\sigma_\alpha^A \otimes~ \text{Tr}_{CC'} \bigg ( \bigg [ M_c^{CC'} \otimes \mathbbm{1}_D \bigg ]  \text{Tr}_{BB'}\bigg ( \bigg [ M_{b|\alpha}^{BB'} \otimes \mathbbm{1}_C  \bigg ]   \rho^{B'C}  \bigg )  \otimes \rho^{C'D} \bigg ) \\
&= \sum_\alpha p(\alpha)~\sigma_\alpha^A \otimes~ \text{Tr}_{CC'} \bigg ( \bigg [ M_c^{CC'} \otimes \mathbbm{1}_D \bigg ]   \bigg (\sum_\gamma p(\gamma) p(b|\alpha, \gamma) \sigma^C_\gamma  \bigg ) \otimes \rho^{C'D} \bigg ) \\
&= \sum_{\alpha,\gamma} p(\alpha)p(\gamma)~\sigma_\alpha^A \otimes~p(b|\alpha, \gamma)~ \text{Tr}_{C'} \bigg ( \bigg [ M_{c|\gamma}^{C'} \otimes \mathbbm{1}_D \bigg ]    \rho^{C'D} \bigg ) \\
&= \sum_{\alpha,\gamma,\beta} p(\alpha)p(\beta)p(\gamma)~p(b|\alpha, \gamma)p(c|\beta,\gamma)~  \sigma_\alpha^A \otimes \sigma_\beta^D \label{eq:app_unsteer2}
\end{align}
also leading to a NLHS model.

To summarise, ordering the sources as \{$\rho^{AB}$, $\rho^{B'C}$, $\rho^{C'D}$\} and denoting $\mathbf{SEP}$ as the set of separable states, $\mathbf{LOC}$ as the set of Bell-local states, and $\mathbf{UNSTEER}_\rightarrow$ as the set of unsteerable states (in an appropriate direction) we have that \{$\mathbf{SEP}$, ~$\mathbf{LOC}$, ~$\mathbf{SEP}$\}, \{${}_\leftarrow\mathbf{UNSTEER}$, ~$\mathbf{SEP}$, ~$\mathbf{UNSTEER}_\rightarrow$\}, \{$\mathbf{SEP}$, ~$\mathbf{UNSTEER}_\rightarrow$, ~$\mathbf{UNSTEER}_\rightarrow$\} and (by symmetry) \{${}_\leftarrow\mathbf{UNSTEER}$,  ~${}_\leftarrow\mathbf{UNSTEER}$, ~$\mathbf{SEP}$\} all admit NLHS models, for any measurements. This is captured in Figure \ref{fig:app_lhs_triangle}.

Recalling that there exist entangled yet unsteerable states, and steerable yet Bell-local states (that is $\mathbf{SEP}\subset \mathbf{UNS}_\rightarrow \subset \mathbf{LOC}$) demonstrates that these models are indeed non-trivial. Indeed network steering is truly a novel phenomena, and fully characterising the resources needed to demonstrate it is an open and fascinating new research question.

\subsection{General Line/Ring Networks}

We can generalise this to an arbitrary line network with trusted endpoints (Figure \ref{fig:app_line_scenario}), which again could be interpreted as a ring network with a single trusted party. 
\begin{figure}
    \centering
\includegraphics[]{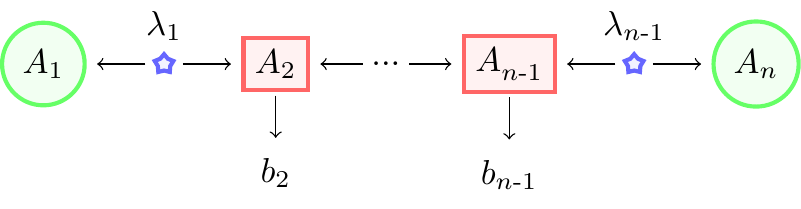}
    \caption{A general linear network with no inputs and trusted endpoints.}
    \label{fig:app_line_scenario}
\end{figure}

For $n$ parties, here an observed set of states would be described by
\begin{align}
\sigma_{b_2, \dots, b_{n-1}}^{A_1 A_n} =  \text{Tr}_{A_2A_2'\dots A_{n-1}A_{n-1}'}\bigg (\bigg [\mathbbm{1}^{A_1} \otimes M^{A_2A_2'}_{b_2} \otimes &\dots \otimes M^{A_{n-1}A_{n-1}'}_{b_{n-1}} \otimes \mathbbm{1}^{A_n} \bigg ] \nonumber \\ &\times\rho^{A_1 A_2} \otimes \rho^{A_2'A_3} \otimes \dots \otimes \rho^{A_{n-1}'A_n} \bigg ). \label{eq:app_linequantum}
\end{align}

The NLHS condition here generalises to
\begin{equation}
\sigma_{b_2, \dots, b_{n-1}}^{A_1 A_n}=\sum_{\lambda_1,\dots ,\lambda_{n-1}}
     p(\lambda_1)\dots p(\lambda_{n-1})
    \times p(b_2|\lambda_1,\lambda_2) \dots p(b_{n-1}|\lambda_{n-2}, \lambda_{n-1})
    \times \sigma^{A_1}_{\lambda_1}\otimes\sigma^{A_n}_{\lambda_{n-1}}. \label{eq:app_linelhs}
\end{equation}

We first remark that as stated in the main text, $\sum_{b_i} \sigma_{b_2, \dots, b_{n-1}}$ is a product state for any $b_i$, and the entanglement of a single $\sigma_{b_2, \dots, b_{n-1}}$ suffices to demonstrate network steering, being incompatible with Equation (\ref{eq:app_linelhs}).

From the previous calculations for the line with four parties (Equations (\ref{eq:app_triangle-first}) - (\ref{eq:app_unsteer1})), we see more generally how taking certain sources as separable can introduce natural sufficient conditions on the other sources to result in an NLHS model overall. For example if a single source is separable, then taking all other sources as unsteerable (in the direction away from this source) leads to an overall NLHS model for a line of any length -- this is a generalisation from the above Equations (\ref{eq:app_unsteer1}) to (\ref{eq:app_unsteer2}). Similarly, if a given source is unsteerable then upon receiving some input (for example from an adjacent separable source), the resulting LHS assemblage can serve as an input to the next party. This idea of ``percolation of inputs '' allows to write down a large class of NLHS models, for arbitrary linear networks. 

As a small example, we discuss the scenario in Figure \ref{fig:app_lhs_line} marked with ($\star$). The separable source second from the left provides an input to the adjacent sources, from which arises natural steering assemblages such as $\text{Tr}(M_{b|\lambda}^B \otimes \mathbbm{1}^{B'} \rho_{BB'})$. If these adjacent sources are steerable in the appropriate direction, we can extract an LHS model, whose corresponding state assemblages can act as an input to the next party. As the parties second and third from the right now receive effective inputs, the relevant condition on the second source from the right to admit a local model is of \textit{locality}. Therefore taking the sources as described would lead to an overall NLHS model for any measurements performed. These type of arguments would hold more generally for arbitrary linear network structures.

\begin{figure}[h]
    \centering
    \begin{subfigure}{.5\textwidth}
  \centering
\includegraphics[]{lhs_bilocal}
\caption{The simplest scenario.}
\label{fig:app_lhs_bilocal}
\end{subfigure}
\vspace{0.2cm}

\begin{subfigure}{.5\textwidth}
  \centering
  \includegraphics[]{lhs_triangle}

  \caption{The triangle scenario.}
  \label{fig:app_lhs_triangle}
\end{subfigure}
\vspace{0.2cm}

\begin{subfigure}{.5\textwidth}
  \centering
\includegraphics[]{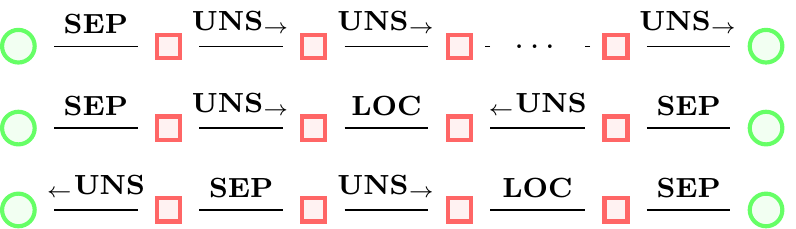} \hspace{10pt}\vspace{-10pt} ($\star$) \vspace{5pt}
  \caption{General scenarios and examples.}
  \label{fig:app_lhs_line}
\end{subfigure}

    \caption{Classes of NLHS models. Here green circles represent trusted parties, and red squares represent  untrusted parties, who perform a fixed measurement. By taking a source as separable ($\mathbf{SEP}$), this can act as an input to the adjacent measurements, and by taking further sources as unsteerable in a certain direction ($\mathbf{UNS}_\rightarrow$ ) or local ($\mathbf{LOC}$) as appropriate, we can arrive at an overall NLHS model for any possible measurements at the untrusted nodes. The example indicated by ($\star$) is discussed in the text.}  
    \label{fig:app_lhsmodels}
\end{figure}

\section{Entanglement Swapping of Doubly-Erased Werner States}
\label{app:ew-dew}

Here we will elaborate on and detail more closely how Doubly-Erased Werner (DEW) states can demonstrate network steering, despite being two way-unsteerable. Recall that the Erasure channel is given by
\begin{equation}
    \Lambda_\eta (\rho) = \eta \rho + (1-\eta)\text{tr}(\rho)\ketbra{d}{d}.
\end{equation}
For example, this channel acting on a qubit state would result in a qutrit state, where now the original qubit state $\rho$ is viewed as being embedded in the $\{ \ket{0}, \ket{1} \}$ subspace, and loss of the system is represented by the $\ket{2}$ state.

For $\rho^{AB}$ a two-qubit state, a result from \cite{tischler2018conclusive} states that $\Lambda_\eta \otimes \mathbbm{1} \rho^{AB}$ is unsteerable from Alice to Bob (for arbitrary measurements) if 
\begin{equation}
    \max_\mathbf{x} \bigg [(1-3\eta)|\mathbf{a}.\mathbf{x}| + \frac{3 \eta}{2}(1+(\mathbf{a} \cdot \mathbf{x})^2) + \norm{T\mathbf{x}} \bigg ] \leq 1.
\end{equation}
where $\mathbf{a}$ is Alice's local Bloch vector, $T$ is the bipartite correlation matrix with entries $T = \text{Tr}(\rho ~ \sigma_i \otimes \sigma_j)$ for $\sigma_i$ the Pauli matrices, and the maximisation is over unit vectors $\mathbf{x}$ in $\mathbbm{R}^3$. For $\rho^{AB}=\rho_W(\omega)= \omega \ketbra{\psi^-}{\psi^-} + (1-\omega)\mathbbm{1}/4$ the Werner state, we have $a=0$ and $T=\text{diag}(-\omega,-\omega,-\omega)$ and this condition becomes
\begin{equation}
  \eta \leq \frac{2}{3} (1- \omega). \label{eq:app_wsunsteerablecondition}
\end{equation}
Now as $\mathbbm{1}^A\otimes \Omega^B [ \rho^{AB} ]$  is unsteerable from Alice to Bob for any channel $\Omega$ if $\rho^{AB}$ is unsteerable from Alice to Bob \cite{quintino2015inequivalence}, we have that the the Doubly-Erased Werner (DEW) state
\begin{equation}
    	\rho_\text{\scalebox{.7}[1.0]{\tiny DEW}}(\eta, \omega) := \Lambda_\eta \otimes \Lambda_\eta \bigg ( \omega \ketbra{\psi^-}{\psi^-} + (1-\omega)\frac{\mathbbm{1}}{4} \bigg )
\end{equation}
is unsteerable in both directions for $\eta \leq \frac{2}{3} (1- \omega)$.

We now detail the full calculation of entanglement swapping for Doubly-Erased Werner (DEW) states. Expanding out the DEW state gives
\begin{align}
       \Lambda_{\eta} \otimes \Lambda_{\eta} \rho_W (\omega) =&\eta^2 \rho_W(\omega)
    + \eta (1-\eta) \frac{\mathbbm{1}_2}{2} \otimes \ket{2}\bra{2} \nonumber\\
    &+ \eta(1-\eta) \ket{2}\bra{2} \otimes \frac{\mathbbm{1}_2}{2} 
    + (1-\eta)^2  \ket{2}\bra{2} \otimes \ket{2}\bra{2}.
\end{align}

 Now consider entanglement swapping with projector $\ket{\psi^-}\bra{\psi^-}$ (on the $\{\ket{0}$, $\ket{1}\}$ subspace) onto two DEW states. We can write this as

\begin{align}
    \text{Tr}_{BB'}\bigg ( \mathbbm{1}^A \otimes\ket{\psi^-}\bra{\psi^-}^{BB'} \otimes \mathbbm{1}^C   &\bigg [ \Lambda_\eta \otimes \Lambda_\eta \rho_W (\omega)^{AB}   \bigg ]  \otimes  \bigg [ \Lambda_\eta \otimes \Lambda_\eta \rho_W (\omega)^{B'C}   \bigg ] \bigg ) \\
    = \text{Tr}_{BB'}\bigg ( \mathbbm{1} \otimes\ket{\psi^-}\bra{\psi^-} \otimes \mathbbm{1} & \bigg [ \eta^2 \rho_W (\omega) 
    + \eta(1-\eta) \ket{2}\bra{2}\otimes \frac{\mathbbm{1}_2}{2}  \\
    &  + \eta(1-\eta)  \frac{\mathbbm{1}_2}{2}\otimes\ket{2}\bra{2}  
     + (1-\eta)^2 \ket{2}\bra{2}\otimes \ket{2}\bra{2}\bigg ]^{\otimes 2} \bigg ).
\end{align}

Note that any term with $\bra{\psi^-}$ acting on a  $\ket{2}$ subspace vanishes, so we can simplify this to
\begin{align}
     &\text{Tr}_{BB'}\bigg ( \mathbbm{1} \otimes\ket{\psi^-}\bra{\psi^-} \otimes \mathbbm{1}  \bigg [ \eta^4 \rho_W (\omega)\otimes \rho_W (\omega) + \eta^3(1-\eta) \rho_W (\omega) \otimes \frac{\mathbbm{1}_2}{2} \otimes \ket{2}\bra{2} \nonumber\\
    &\hspace{110pt}+ \eta^3(1-\eta) \ket{2}\bra{2}\otimes \frac{\mathbbm{1}_2}{2}
    \otimes \rho_W (\omega) + \eta^2(1-\eta)^2  \ket{2}\bra{2}\otimes\frac{\mathbbm{1}_2}{2}\otimes \frac{\mathbbm{1}_2}{2}\otimes\ket{2}\bra{2} \bigg ] \bigg ) \label{eq:app_werneres1} \\
    &=\frac{1}{4} \bigg ( \eta^4 \rho_W(\omega^2) + \eta^3 (1-\eta) \frac{\mathbbm{1}_2}{2} \otimes \ket{2}\bra{2}  \label{eq:app_werneres2}\\
    &\hspace{30pt}+ \eta^3 (1-\eta) \ket{2}\bra{2} \otimes \frac{\mathbbm{1}_2}{2} + \eta^2(1-\eta)^2  \ket{2}\bra{2} \otimes \ket{2}\bra{2} \bigg ) \\
    &= \frac{\eta^2}{4} \Lambda_\eta \otimes \Lambda_\eta \rho_W (\omega^2) \\
    &=\frac{\eta^2}{4} \rho_\text{\scalebox{.7}[1.0]{\tiny DEW}}(\eta, \omega^2).
\end{align}
In lines (\ref{eq:app_werneres1}) - (\ref{eq:app_werneres2}) we used the fact that entanglement swapping of two Werner states leads to another Werner state with the product of the visibilities.

Hence entanglement swapping of two DEW states leads to another DEW state with the product of the original Werner visibilities. As discussed in the main text, the DEW state $\rho_\text{\scalebox{.7}[1.0]{\tiny DEW}}(\eta, \omega^2)$  is entangled for $\omega > \frac{1}{3}$, and so by choosing appropriate parameters we can witness network steering in a line of arbitrary length by entanglement swapping these unsteerable-yet-entangled states. Therefore network steering is a fundamentally different phenomenon to conventional quantum steering.

\section{Further Observations}
\label{app:further-obvs}
Here we detail two basic observations relating to general network scenarios that are not discussed in the main text.
\subsection{Endpoint sources between untrusted nodes with no inputs can be taken to be separable.}

We claim that in the following scenario (Figure \ref{fig:app_sep_wlog}),
\begin{figure}[!h]
  \centering
  \scalebox{1}{
\includegraphics[]{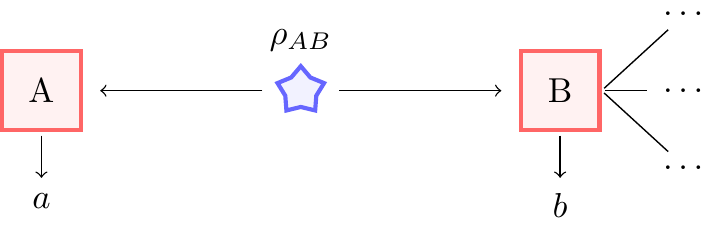}}
\caption{}
\label{fig:app_sep_wlog}
\end{figure}
we can take $\rho_{AB}$ to be separable without loss of generality. Suppose the overall state (or probabilities, if all other nodes are untrusted) is
\begin{align}
    \sigma_{a,b,\dots} &= \text{Tr}_{AB\dots} \bigg ( M_a^A \otimes M_b^{B\dots} \otimes \dots \bigg [ \rho_{AB} \otimes \dots \bigg ]\bigg ) \\
    &= \text{Tr}_{B\dots} \bigg (  M_b^{B\dots} \otimes \dots \bigg [ \text{Tr}_A \bigg (M_a \rho_{AB} \bigg ) \otimes \dots \bigg ]\bigg ).
\end{align}
Now set $\rho'_{AB}$ as 
\begin{equation}
    \rho'_{AB} = \sum_{a'} \ket{a'}\bra{a'} \otimes \text{Tr}_{A'}(M_{a'} \rho_{A'B})
\end{equation}
which is normalised. Then also set
\begin{equation}
    N_a^A = \ket{a}\bra{a}
\end{equation}
This gives
\begin{equation}
    \text{Tr}_{AB\dots} \bigg ( N_a^A \otimes M_b^{B\dots} \otimes \dots \bigg [ \rho'_{AB} \otimes \dots \bigg ]\bigg ) = \text{Tr}_{B\dots} \bigg (  M_b^{B\dots} \otimes \dots \bigg [ \text{Tr}_A \bigg (M_a \rho_{AB} \bigg ) \otimes \dots \bigg ]\bigg ) 
\end{equation}
as before, reproducing the same assemblage/probabilities using a separable state (and projective measurement $M_a$).

\subsection{NLHS models on the line can be reproduced with separable states and separable measurements.}

First recall that a measurement on subsystems $A$ and $B$ is said to be \textit{separable} if each effect $M_x^{AB}$ can be written as a sum of tensor products, that is for each $M_x^{AB}$ there exist $A_i$ and $B_i$ such that $M_x^{AB} = \sum_i A_i \otimes B_i$. If a measurement is not separable it is said to be \textit{joint}.

We prove the statement for the general line network, which has quantum model and NLHS as defined in Equations (\ref{eq:app_linequantum}) and (\ref{eq:app_linelhs}) respectively. We can rewrite the NLHS model as
\begin{align}
    \noindent&\sigma_{b_2, \dots, b_{n-1}}^{A_1 A_n}=\sum_{\lambda_2,\dots ,\lambda_{n-2}}
     p(\lambda_2)\dots p(\lambda_{n-2})
    \quad p(b_3|\lambda_2,\lambda_3) \dots p(b_{n-2}|\lambda_{n-3}, \lambda_{n-2})
     \quad \sigma^{A_1}_{b_2 | \lambda_2}\otimes\sigma^{A_n}_{b_{n-1}|\lambda_{n-1}},
\end{align}
by setting $\sigma^{A_1}_{b_2 | \lambda_2} = \sum_{\lambda_1} p(\lambda_1)p(b_2|\lambda_1, \lambda_2)~\sigma_{\lambda_1}^{A_1}$ and similarly for $\sigma^{A_n}_{b_{n-1}|\lambda_{n-1}}$. These standard LHS assemblages can be prepared with separable states and commuting measurements \cite{kogias2015hierarchy, moroder2016maps}. We can set $\rho^{A_1 A_2}$ and $\rho^{A_{n-1}'A_n}$ to be these separable states in question, and set all other sources according to the separable states
\begin{equation}
    \rho^{A_i A_{i+1}} =\sum_{\lambda_i} p(\lambda_i) \ket{\lambda_i \lambda_i}\bra{\lambda_i \lambda_i}.
\end{equation}
We then set the first measurement $M_{b_2}^{A_2 A_2'}$ to be the separable measurement $\sum_{\lambda_2} N_{b_2|\lambda_2}\otimes \ket{\lambda_1}\bra{\lambda_1}$, where $N_{b_2|\lambda_2}$ are the commuting ones mentioned above, and similarly for the last measurement $M_{b_{n-1}}^{A_{n-1} A_{n-1}'}$ . Then all other measurements can be defined via
 \begin{equation}
M_{b_i}^{A_i A_i'}=\sum_{\lambda_{i-1}, \lambda_i}p(b|\lambda_{i-1}, \lambda_i)\ket{\lambda_{i-1} \lambda_i}\bra{\lambda_{i-1} \lambda_i},
 \end{equation}
 which are separable. We can then see that inserting these expressions into (\ref{eq:app_linequantum}) would yield the desired NLHS assemblage in (\ref{eq:app_linelhs}).

\end{document}